\newtheorem{theorem}{Theorem}
\newtheorem{lemma}[theorem]{Lemma}
\newtheorem{corollary}{Corollary}
\begin{document}
\title{FEC for Lower In-Order Delivery Delay in Packet Networks}
\author{Mohammad Karzand, Douglas J. Leith, Jason Cloud, Muriel Medard
\IEEEcompsocitemizethanks{
\IEEEcompsocthanksitem{M.Karzand and D.J.Leith are with Trinity College Dublin, Ireland}
\IEEEcompsocthanksitem{J.Cloud and M.Medard are with Massachusetts Institute of Technology, MA, USA}
}
\thanks{Work by MK and DL supported by Science Foundation Ireland grants 13/IF/I2781 and 11/PI/11771.}
}

\maketitle

\begin{abstract}
We consider use of Forward Error Correction (FEC) to reduce the in-order delivery delay over packet erasure channels. We propose a class of streaming codes that is capacity achieving and provides a superior throughput-delay trade-off compared to block codes by introducing flexibility in where and when redundancy is placed. This flexibility results in significantly lower in-order delay for a given throughput for a wide range of network scenarios.  Furthermore, a major contribution of this paper is the combination of queuing and coding theory to analyze the code's performance. Finally, we present simulation and experimental results illustrating the code's benefits.
\end{abstract}
\vspace{-10pt}
\section{Introduction}
\label{sec:introduction}
In this paper we study the in-order delivery delay of forward error-correction codes in packet switched networks where erasures are due to queue overflow, lossy links, \emph{etc.} (i.e., a packet erasure channel).  In-order delivery delay is an important metric, often of much interest in applications where received information packets are buffered until they can  be delivered in-order.  Any packet loss typically leads to increases in the in-order delivery delay that can adversely affect upper layer performance.  For example, traditional ARQ takes on the order of a round-trip time, or more, to recover from a lost packet. Not only is the lost packet delayed, but all subsequent information packets must be buffered until the loss is corrected (i.e., "head-of-line blocking"). Our interest is in the design and analysis of forward error-correction codes that mitigate this problem and are specifically tailored so that they achieve low in-order delivery delay.

Traditionally, the primary aim when designing error-correction codes is to maximise throughput.  The inherent fact that there is a trade-off between throughput and delay is, of course, recognised in this work (\emph{e.g.,} in the analysis of ARQ schemes with delayed feedback \cite{leith10}), but it is very much a secondary concern.   In contrast, our interest is in sacrificing some throughput in order to achieve much lower in-order delivery delay.  This is motivated by the observation that bandwidth is relatively abundant in modern networks, but delay continues to be a major concern for many applications. The use of some some bandwidth to lower delay is therefore an appealing proposition.   

This brings the trade-off between throughput and delay to the forefront; and, in particular, it raises questions as to whether new codes can be constructed that may achieve a more favourable trade-off than traditional codes.  Consider, for example, a rate $\nicefrac{k}{n}$ systematic block code where a block consists of $k$ information packets followed by $n-k$ coded packets. This is illustrated in Figure \ref{fig:intro}(a).   Suppose that the code is an ideal one in the sense that receipt of any $k$ of the $n$ packets allows all of the $k$ information packets to be reconstructed.  Furthermore, assume that the first information packet is lost. All remaining information packets have to be buffered until the first coded packet is received. At this point, the first information packet can be reconstructed and all of the information packets can be delivered in-order.   The in-order delivery delay is therefore proportional to $k$.  Alternatively, suppose that the $n-k$ coded packets are distributed uniformly among the information packets, rather than all being placed after the $k$ information packets.  To keep the code causal, suppose that each coded packet only protects the preceding information packets in the block.  This is still a block code, but the coded packets and their locations differ from the classical setup (see Figure \ref{fig:intro}(b)).  Assume again that the first information packet is lost. This loss can now be recovered on receipt of the first coded packet resulting in a delay that is now proportional to $\nicefrac{k}{n-k}$ (\emph{i.e.,} this is much lower than $k$ when $n$ is large).  Of course, the impact on throughput due to the fact that each coded packet only protects the preceding information packets requires further analysis.

\begin{figure}
\centering
\subfigure[Systematic Block Code]{
\includegraphics[width=0.4\columnwidth]{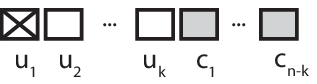}
}\quad
\subfigure[Low Delay Code]{
\includegraphics[width=0.48\columnwidth]{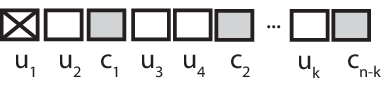}
}
\caption{Example of two codes with different throughput-delay characteristics. Shaded squares indicated coded packets, unshaded indicate information packets.}
\vspace{-15pt}
\label{fig:intro}
\end{figure}

Our main contributions are as follows.  While our proposal falls within the class of streaming codes similar to Ho et al. \cite{Ho06}, the novelty of our code construction comes from the flexibility introduced with regard to the location where redundancy is placed. We show that these codes achieve capacity, yet they provide a superior throughput-delay trade-off compared to block codes by achieving significantly lower in-order delay for a given throughput.  The mathematical analysis of throughput and delay is a primary contribution of the paper, requiring the development of a number of analytic tools derived from both coding and queuing theory. By building on the work performed by \cite{tanner61}, our approach allows for a wide variety of coding scenarios that allow us to develop a novel interplay between the two areas of research. We further demonstrate the performance of these low delay codes by evaluating them using both simulations and experimental measurements.  While the mathematical analysis is confined to operation without feedback, we use the simulations and experiments to show that our proposed code construction can be easily combined with feedback to provide larger gains.

\vspace{-5pt}
\section{Related Work}
\label{sec:blockcodes}


\subsection{Streaming Codes}
We use the term \emph{streaming code} to refer to codes which do not require a bit/packet stream to be partitioned into blocks or generations before coding operations can occur.   Probably the most common examples are convolutional codes.  Most work on the performance of convolutional codes has focused on their error-correcting capabilities, particularly to bursty errors, and only a few have investigated delay. For example, Martinian et al. \cite{martinian04} and Hehn et al. \cite{Hehn09} investigate the decoding delay of convolutional codes specifically designed for channels with burst errors; and Lee et al. \cite{lee92} investigate the delay performance of orchard codes (a class of convolutional codes).  

Classical convolutional codes do not lend themselves to recoding at intermediate nodes within a network.  This has motivated work on convolutional network codes (proposed in \cite{Koetter03} and studied extensively by \cite{Erez04,Erez05,Li06,Erez10}).  Delay bounds for convolutional network codes  are provided by Guo et al. \cite{Guo13}. The delay of other codes of this type, such as those proposed by Joshi et al. \cite{joshi}, have also shown promising results. However, limitations in the models used within this work make it unclear whether or not their results can be extended to the scenarios in which we are interested. We note that T\"{o}m\"{o}sk\"{o}zi et al. \cite{toemoeskoezi_delay_2014} introduce a non-systematic code similar to our low delay code. Experimental results show significant delay gains over Reed-Solomon codes, but no analysis is provided.

\subsection{Block Codes}
\emph{Block codes} require a bit/packet stream to be partitioned into generations or blocks, each generation or block being treated independently from the rest.   
%
For example, assume that a message of size $N$ information packets $u_m$, $m=1,\ldots,N$, is partitioned into blocks or generations of size $k$ packets. Coded packets $c_{i,j}$, $i=1,\ldots,\left\lceil\nicefrac{N}{k}\right\rceil$, $j=1,2,\ldots$, are then generated separately for each block. If the code is systematic, the information packets in each block/generation are transmitted first with the coded packets transmitted after to help recover from any errors or erasures (see Figure \ref{fig:intro}(a)). If the code is not systematic, only the coded packets are transmitted. 
%
%
%
Previous work has primarily focused on the decoding delay of non-systematic constructions \cite{heidarzadeh_design_2012,lucani_broadcasting_2009,lucani_online_2010,lucani_random_2009,nistor_network_2010,sundararajan_feedback-based_2009,zeng_joint_2012,leith15}, and show that the decoding delay is essentially proportional to the block size.   The in-order delivery delay of systematic block codes is lower than that of non-systematic codes but remains essentially proportional to the block size (with a smaller pre-factor than for non-systematic codes), see Cloud et al. \cite{Cloud15} and references therein. 

\subsection{Baseline Block Code}
We use the following systematic random linear block code as a baseline for performance comparison.  This block code is similar to that considered in \cite{Cloud15} and is constructed as follows.  We generate $n-k$ coded packets, $c_{i,j}$, $i=1,\ldots,\left\lceil\nicefrac{N}{k}\right\rceil$, $j=1,\ldots,n-k$, from each block of $k$ information packets, which results in a code of rate $\nicefrac{k}{n}$. Each coded packet is a weighted random linear combination of the information packets within its block, i.e.,
\begin{align}
c_{i,j}:= \sum_{m=(i-1)k+1}^{ik} w_{i,j,m} u_{m},
\end{align}
where each information packet $u_m$ is treated as a vector in an appropriate finite field $\mathbb{F}$ of size $Q$ and the coefficients $w_{i,j,m}$ are drawn i.i.d uniformly at random from $\mathbb{F}$. It should be noted that calculations in field $\mathbb{F}$ are always carried out over symbols of size $Q$ rather than the packets themselves. 
%
A systematic code is obtained by first transmitting the $k$ information packets followed by the $n-k$ coded packets.  

Maximum likelihood decoding is used \emph{i.e.} Gaussian elimination.  Should an erasure occur, any coded packet can be used to help  reconstruct/decode the missing information packet.  For sufficiently large field size $Q$ and no more than $n-k$ erasures, any combination of $k$ packets in each block can be used to recover from the erasures with high probability.  
If more than $n-k$ erasures occurs, a decoding failure occurs and some of the information packets within the block may be unrecoverable. Otherwise, information packets will be recovered with high probability when at least $k$ packets have been successfully received.  

This code is asymptotically capacity achieving over erasure channels as block size $n\rightarrow\infty$ \cite{subra08}.  It also provides a good baseline for comparison since it is a modern, high performance code that has a number of optimality properties (\emph{i.e.,} it is representative of the best possible block code performance).  In particular, this code minimises the probability of a decoding failure for any given coding rate over a large class of block codes in addition to minimizing the decoding delay \cite{subra08}.

\vspace{-5pt}
\section{Low-Delay Coding Over a Stream}
\label{sec:lowdelaycoding}


Our interest is in constructing a code that provides low in-order delivery delay while providing protection against errors/erasures.  We consider a systematic random linear streaming code construction that inserts coded packets at strategic locations within an uncoded/information packet stream.  

Assume that time is slotted and each slot is indexed as $t=1,2,\ldots$. Within each slot, a single packet can be transmitted. The code is constructed by interleaving information packets (i.e., uncoded packets) $u_j$, $j=1,2,\ldots$ with coded packets $c_i$, $i=1,2,\ldots$. For reasons that will be explained later, one coded packet is inserted after every $l-1$ information packets and transmitted over the network or channel. This results in a code of rate $\nicefrac{l-1}{l}$. Figure \ref{fig:scheme} illustrates this code construction.

\begin{figure}
\centering
\includegraphics[width=0.7\columnwidth]{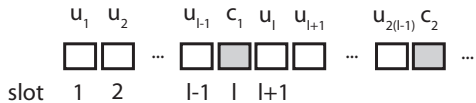}
\caption{Illustrating the setup considered.  Sequence $\{u_j\}$ of information packets is interleaved with sequence $\{c_i\}$ of coded packets (indicated as shaded) and transmitted.  Slots correspond to a single packet transmission and are indexed $1,2,\cdots$. }
\vspace{-10pt}
\label{fig:scheme}
\end{figure}

Each coded packet is generated by taking random linear combinations from a span of previously transmitted information packets $\{u_L,\ldots,u_U\}$. This span is referred to as the \textit{coding window}. We will assume in the analysis that $L=1$ and $U$ is the index of the information packet immediately proceeding the generated coded packet (i.e., $U = \left(l-1\right)i$ assuming the generated coded packet is $c_i$). Therefore, coded packet $c_i$ is generated as follows:
\begin{equation}
c_i = f_{ i}(u_1, u_2 , \dots ,u_{(l-1)i } ) := \sum_{j=1}^{(l-1)i} w_{ij} u_j,
\end{equation}
where each packet $u_j$ is treated as a vector in $\mathbb{F}_Q$ and each coefficient $w_{ij}\in \mathbb{F}_Q$ is chosen randomly from an i.i.d. uniform distribution.

Before proceeding, it should be noted that in practice the lower edge of the coding window can be determined in a variety of ways. The only constraint is that the coding window must be large enough to ensure intermediate decoding opportunities at the receiver.  For example, suppose that the receiver has received or decoded all information packets up to and including packet $u_j$. Feedback can be used to communicate this to the transmitter allowing it to adjust the coding window so that the lower edge of the window is $u_L=u_{j+1}$ for all subsequent coded packets. The generator matrix shown in Figure \ref{fig:sliding-window} illustrates this sliding window approach, where the columns indicate the information packets that need to be sent and the rows indicate the composition of the packet transmitted at any given time.  Its encoding/decoding complexity is analysed in Section \ref{sec:complexity}.


\begin{figure}
\centering
\includegraphics[width=0.8\columnwidth]{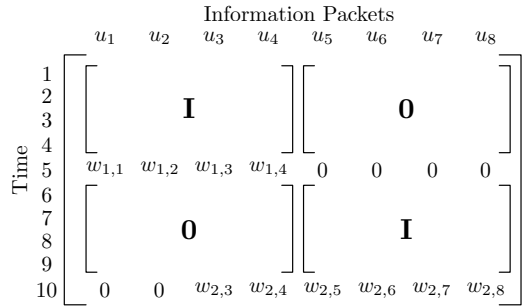}
\caption{Example generator matrix for the low delay code with sliding window showing the coefficients used to produce each packet. In this example, we assume that the transmitter has obtained knowledge from the receiver by time 10 indicating that it has successfully received/decoded packets $u_1$ and $u_2$ allowing it to adjust the lower edge of the coding window to exclude them from packet $c_2$.}
\vspace{-10pt}
\label{fig:sliding-window}
\end{figure}

The receiver decodes on-the-fly once enough packets/degrees of freedom have been received. In more detail, the receiver maintains a generator matrix $G_t$ at time $t$ which is similar to that shown in Figure \ref{fig:sliding-window} except that it is composed only of the coefficients obtained from received packets. If $G_t$ is full rank, Gaussian elimination is used to recover from any packet erasures/errors that may have occurred during transit. More details are provided in Section \ref{sec:failure-probability}. In our analysis, unless stated otherwise, we will make the standing assumption that the field size $Q$ is sufficiently large so that with probability one each coded packet helps the receiver recover from one information packet erasure.  Specifically, each coded packet row added to generator matrix $G_t$ increases the rank of $G_t$ by one. This assumption is relaxed in Section \ref{sec:failure-probability} where the probability of a decoding failure is considered; and our experimental measurements do not, of course, make this assumption.

Unlike the codes described in Section \ref{sec:blockcodes}, the code described here generates coded packets that are (i) individually streamed  between information packets (rather than being transmitted in groups of size $k$ packets) and (ii) each coded packet protects all preceding information packets (rather than just the information packets within its block). Furthermore, for a given code rate it is easy to see that this code construction should tend to decrease the overall in-order delivery delay at the receiver compared to a block code -- recall the example in Section \ref{sec:introduction}.  
%
However, the challenge is to quantify the delay performance of the low delay code. We also note that the causal construction used will limit the power of the low delay code and so its throughput performance also needs to be analysed.

\vspace{-5pt}
\section{In-Order Delivery Delay and Throughput}
\label{sec:inorder-delay}
     
\subsection{In-Order Delivery Delay}

Information packets are delivered in-order at the receiver until an erasure of an information packet occurs.   Upon erasure, in-order delivery is paused (arriving packets are buffered) until the decoder receives as many coded packets as the number of erasures, at which point in-order delivery resumes.   

\begin{figure}
\centering
\includegraphics[width=0.8\columnwidth]{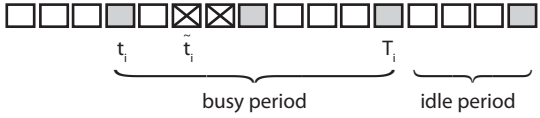}
\caption{Illustrating notation used.  Clear rectangles indicate information packets, shaded rectangles coded packets, crosses indicate erasures, coded packets are inserted every $l=4$ slots.  $t_i$ is the coded packet slot immediately preceding the information packet erasure at slot $\tilde{t}_i$ which pauses in-order delivery at the receiver, $T_1$ the coded packet slot at which in-order delivery resumes.  The information packet at slot $t_i+1$ is delivered without delay, but any information packets in slots $\{\tilde{t}_i,\cdots,T_i\}$ are delayed.  }
\vspace{-10pt}
\label{fig:notation}
\end{figure}

Let $\{\tilde{t}_i\}$ denote the sequence of slot times at which erasure of an information packet pauses in-order delivery and $\{T_i\}$ the corresponding sequence of times at which in-order delivery resumes.   Note that the $T_i$ must be a slot at which a coded packet is transmitted.  Letting $t_i=\lfloor \tilde{t}_i/l \rfloor l$ be the coded packet slot immediately preceding slot $\tilde{t}_i$, we can then define the sequence of coded packet slots $\{t_1,T_1,t_2,T_2,\cdots\}$.    See Figure \ref{fig:notation} for a schematic illustration.   Slots $\{t_i+1,t_i+2,\cdots,T_i\}$ contain information packets delayed by the $i$'th pause, plus perhaps non-delayed packets $\{t_i+1,\tilde{t}_i\}$ and this set of slots is referred to as the $i$'th ``busy'' period.  Slots $\{T_i+1,\cdots,t_{i+1}\}$ can be partitioned into intervals $\{T_i+1,T_i+l\}$,  $\{T_i+l+1,T_i+2l\}$, \emph{etc.} each of size $l$ slots and ending with a coded packet slot (since $T_i$ and $t_i$ are both coded packet slots).  Each of these intervals of $l$ slots is referred to as an ``idle'' period.     

The busy/idle period terminology is analogous with a queueing system operating in embedded time corresponding to the coded packet slots.  Information packet erasures can be thought of as queue arrivals and reception of coded packets as queue service.  Pauses in in-order delivery then correspond to periods when the queue size is non-zero.

Index the busy/idle periods by $j=1,2,\cdots$ and let $i(j)$ be the index of the pause corresponding to the $j$'th busy period (i.e., the $j$'th busy period consists of slots $\{t_{i(j)},\cdots,T_{i(j)}\}$).    With the $j$'th period we associate a random variable $S_j$, with $S_j=0$ for an idle period and $S_j=(T_{i(j)}-t_{i(j)})/l$ for a busy period (i.e., $S_j$ equals the number of coded packets transmitted before delivery resumes).  Since packet erasures are i.i.d., the busy/idle periods form a renewal process and the $\{S_j\}$ are i.i.d.  Letting $S\sim S_j$ the following theorem completely characterises the probability distribution of the busy time $S$ and is one of our main results.

\vspace{-5pt}
\begin{theorem}[Busy Time]\label{maingeniedet}
In an erasure channel with erasure probability $\epsilon$, suppose we insert a coded packet in between every $l-1$ information packets.  Assume that each coded packet can help us to recover from one erasure.  
We have:

\vspace{5pt}
\noindent $I$.  For all values of $\epsilon$ and $l$ such that $l \epsilon < 1$, the mean of the probability distribution of $S$ exists and is finite. 
%

\vspace{5pt}
\noindent $II$.
\vspace{-5pt}
\begin{equation}
p_S\left(s\right) = 
\begin{cases}
\left(1-\epsilon\right)^{l-1} & \text{for } s=0 \\
\left(l-1\right) \epsilon \left(1-\epsilon\right)^{l-1} & \text{for } s=1 \\
\frac{l-1}{s} \epsilon^{s} \left(1-\epsilon\right)^{s\left(l-1\right)} {\left(s-1\right)l \choose s-1} & \text{for } s>1 \\
0 & \text{otherwise.}
\end{cases}
\end{equation}

\noindent $III$.
\begin{align}
E\left(S\right) &= \frac{\left(l-1\right) \epsilon \left(1-\epsilon\right)^{l-1}}{1-l \epsilon} &\\
E\left(S^2\right) &= E\left(S\right) + \frac{l\left(l-1\right) \epsilon^2 \left(1- \epsilon \right)^{l} }{\left(1-l \epsilon \right)^3}
\end{align}
\end{theorem}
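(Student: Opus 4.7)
The plan is to analyze $S$ as the first return time to $0$ of a discrete-time walk $Q_k$ embedded at the coded-packet slots. Write $E_k \sim \mathrm{Binomial}(l-1,\epsilon)$ for the number of information-packet erasures in cycle $k$ and $B_k \sim \mathrm{Bernoulli}(1-\epsilon)$ for the indicator that the coded packet of cycle $k$ is received, independent across $k$. Since a received coded packet clears exactly one outstanding erasure, the queue obeys $Q_k = \max(0, Q_{k-1} + Y_k)$ with $Y_k := E_k - B_k \in \{-1, 0, \ldots, l-1\}$; in particular $\mathbb{E}[Y_k] = l\epsilon - 1$ and $\mathrm{Var}(Y_k) = l\epsilon(1-\epsilon)$. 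Part I is then the standard stability criterion: when $\mathbb{E}[Y_k] < 0$, i.e.\ $l\epsilon < 1$, the walk has negative drift and the busy period of the embedded queue has finite expectation (e.g.\ by Wald's identity applied to the first-passage stopping time).

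For Part II, the cases $s=0$ and $s=1$ follow immediately from the construction: $S=0$ requires $E_1=0$, giving $(1-\epsilon)^{l-1}$, while $S=1$ requires $E_1=B_1=1$, giving $(l-1)\epsilon(1-\epsilon)^{l-1}$. For $s \geq 2$ I would condition on $Q_1 = q \geq 1$ and invoke the hitting-time (Kemperman) theorem, valid because $Y_k \geq -1$:
\begin{equation*}
\Pr(\tau_{q \to 0} = s - 1) = \frac{q}{s-1}\, \Pr(Y_2 + \cdots + Y_s = -q).
\end{equation*}
The probability on the right counts joint configurations with $s-1-q$ total erasures across $(s-1)(l-1)$ info slots and $s-1$ coded slots; by Vandermonde's identity it collapses to $\binom{(s-1)l}{s-1-q}\epsilon^{s-1-q}(1-\epsilon)^{(s-1)(l-1)+q}$. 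Writing $\Pr(Q_1 = q) = \Pr(E_1=q, B_1=0) + \Pr(E_1=q+1, B_1=1)$ and combining via Pascal's rule $\binom{l-1}{q}+\binom{l-1}{q+1}=\binom{l}{q+1}$, the sum over $q$ reduces to $\sum_{q\ge 1} q\binom{l}{q+1}\binom{(s-1)l}{s-1-q}$; two further Vandermonde expansions together with the identity $\binom{sl}{s}=l\binom{sl-1}{s-1}$ collapse this to $\binom{(s-1)l}{s}$, which upon using $\binom{(s-1)l}{s}/(s-1) = \tfrac{l-1}{s}\binom{(s-1)l}{s-1}$ yields the stated formula.

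For Part III I avoid summing the explicit series and use stopping-time moment identities. On the busy event, $S = 1 + \tau$ with $\tau$ the first passage of $Q$ from $Q_1$ to $0$, and Wald's first identity gives $\mathbb{E}[\tau \mid Q_1 = q] = q/(1 - l\epsilon)$. Hence $\mathbb{E}[S] = \Pr(E_1 \geq 1) + \mathbb{E}[Q_1 \mathbf{1}_{E_1 \geq 1}]/(1 - l\epsilon)$; independence of $E_1$ and $B_1$ together with $E_1 \mathbf{1}_{E_1 \geq 1} = E_1$ reduces the numerator to $(l-1)\epsilon - (1-\epsilon)(1 - (1-\epsilon)^{l-1})$, which combines with $\Pr(E_1 \geq 1) = 1 - (1-\epsilon)^{l-1}$ to give the claimed expression. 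For $\mathbb{E}[S^2]$ I would apply the Wald-type variance identity $\mathrm{Var}(\tau \mid Q_1 = q) = q\sigma_Y^2/(1-l\epsilon)^3$ with $\sigma_Y^2 = l\epsilon(1-\epsilon)$; expanding $\mathbb{E}[(1+\tau)^2 \mathbf{1}_{E_1 \geq 1}]$ and computing $\mathbb{E}[Q_1^2 \mathbf{1}_{E_1 \geq 1}]$ via $Q_1 = E_1 - B_1$ yields, after algebra, $\mathbb{E}[S^2] = \mathbb{E}[S] + l(l-1)\epsilon^2(1-\epsilon)^l/(1-l\epsilon)^3$.

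The main obstacle will be the combinatorial simplification in Part II: the double Vandermonde collapse and the cancellation via $\binom{sl}{s} = l\binom{sl-1}{s-1}$ are routine but easy to misstate, and keeping track of the constraints on the summation index requires care. A potentially cleaner alternative, worth attempting in parallel, is to prove the formula by applying the Cycle Lemma directly to the slot-level walk over all $sl$ slots of the busy period, interpreting $\tfrac{l-1}{s}\binom{(s-1)l}{s-1}$ as a count of admissible trajectories carrying the common per-trajectory weight $\epsilon^s(1-\epsilon)^{s(l-1)}$.
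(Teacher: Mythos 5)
Your proposal is correct and recovers all three parts of the theorem, but it diverges from the paper's proof in Parts~I and~III while agreeing with it in Part~II. For Part~II you and the paper are using the same key ingredient under different names: your appeal to the Kemperman hitting-time theorem for a left-bounded walk is exactly the cycle-lemma result that the paper cites as Tanner's lemma (the busy-period law of a $G/D/1$ queue), and your conditioning variable $q=Q_1$ coincides with the paper's $r-1$, where $r$ is the total number of erasures in the first $l$-interval; both routes then require the same Vandermonde collapse, which the paper isolates as a separate combinatorial lemma. Where you genuinely diverge is in Parts~I and~III: the paper proves finiteness of $E(S)$ via a Chernoff tail bound on the binomial erasure count, and obtains $E(S)$ and $E(S^2)$ by the somewhat ad hoc device of differentiating the normalization identity $\sum_s p_S(s)=1$ with respect to $\epsilon$; you instead observe that $S=1+\tau$ on the busy event, with $\tau$ a first-passage time of a left-bounded walk, and read off both moments from Wald's first and second identities, $E[\tau\mid Q_1=q]=q/(1-l\epsilon)$ and $\mathrm{Var}(\tau\mid Q_1=q)=q\,l\epsilon(1-\epsilon)/(1-l\epsilon)^3$. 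Your route makes Parts~I and~III independent of the combinatorics of Part~II and keeps the probabilistic mechanism visible, at the price of a separate algebraic reduction when assembling $E[Q_1^2\mathbf{1}_{E_1\geq 1}]$; the paper's differentiation trick is shorter once the PMF is known but is less explanatory. One point to write carefully: invoking Wald in Part~I needs $E[\tau]<\infty$, not merely negative drift, but the standard truncation argument (apply Wald to $\tau\wedge n$, use that the increments are bounded below by $-1$ so the stopped sum never drops below $-q$, and let $n\to\infty$) closes that gap cleanly.
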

\begin{proof}
See Appendix.
\end{proof}

Observe that the requirement that $l \epsilon < 1$ for $S$ to have finite mean is a natural one.   The rate of the coding scheme is $R= \frac{l-1}{l}= 1- \frac{1}{l}$.  Since this rate of transmission should be less than the channel capacity, we require $R< 1- \epsilon$, and so $l \epsilon < 1$.  

We also emphasise that the in-order delivery delay expressions in Theorem \ref{maingeniedet} are exact (they are not bounds) and have an easy to evaluate closed-form (they are not combinatorial in nature).   This is notably different from previous analysis of in-order delivery delay and derives from the favourable structure of the low-delay coding scheme.

To continue the analysis, we introduce the random variable $S^+= \min \{S , 1\}. $ $S^+$ helps us to count the number of intervals in the communication interval $T$.  It is straightforward to compute the probability distribution of $S^+$ as follows:
 
\begin{corollary}\label{maingeniecorr}
Let $S^+ = \min \{S, 1\}.$ We have:
   
\vspace{5pt}
\noindent $I$. For all values of $\epsilon$ and $l$ such that $l \epsilon < 1$, the mean of the probability distribution of $S^+$ exists and is finite. 

\vspace{5pt}
\noindent $II$.
\begin{equation}
p_{S^+}\left(s\right) =
\begin{cases}
\left(l \epsilon +1-\epsilon \right) \left(1-\epsilon \right)^{l-1} & \text{for } s=1 \\
\frac{l-1}{s} \epsilon^{s} \left(1-\epsilon\right)^{s\left(l-1\right)} {\left(s-1\right)l \choose s-1} & \text{for } s>1 \\
0 & \text{otherwise.}
\end{cases}
\end{equation}

\noindent $III$.
\begin{equation}
E\left(S^+ \right) = \frac{ \left(1-\epsilon \right)^{l}}{1-l \epsilon}.
\end{equation}
\end{corollary}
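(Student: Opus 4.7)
My plan is to obtain the corollary as a short direct consequence of Theorem \ref{maingeniedet}, since (reading the formula for $p_{S^+}$) the intended definition is that $S^+$ coincides with $S$ whenever $S\ge 1$ and is bumped up to $1$ whenever $S=0$; in other words the only mass that needs to move is $P(S=0)$, which gets pushed to the atom at $1$. Everything else just copies over from Theorem \ref{maingeniedet}.

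First I would establish Part II by two separate cases. For $s>1$ the event $\{S^+=s\}$ is identical to $\{S=s\}$, so $p_{S^+}(s)=p_S(s)$ and the third branch of the piecewise formula is inherited verbatim from Theorem \ref{maingeniedet}. For $s=1$, the event $\{S^+=1\}$ is the disjoint union of $\{S=0\}$ and $\{S=1\}$, so
\begin{equation}
p_{S^+}(1)=p_S(0)+p_S(1)=(1-\epsilon)^{l-1}+(l-1)\epsilon(1-\epsilon)^{l-1}=(l\epsilon+1-\epsilon)(1-\epsilon)^{l-1},
\end{equation}
which matches the first branch of the claimed distribution.

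For Part I, since $S^+\le S+1$ pointwise, finiteness of $E(S)$ whenever $l\epsilon<1$ (Part I of Theorem \ref{maingeniedet}) implies finiteness of $E(S^+)$ under the same condition.

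For Part III, the cleanest route is to write $S^+=S+\mathbf 1_{\{S=0\}}$, so
\begin{equation}
E(S^+)=E(S)+P(S=0)=\frac{(l-1)\epsilon(1-\epsilon)^{l-1}}{1-l\epsilon}+(1-\epsilon)^{l-1},
\end{equation}
and then factor out $(1-\epsilon)^{l-1}/(1-l\epsilon)$ and simplify the numerator $(l-1)\epsilon+(1-l\epsilon)=1-\epsilon$, giving the stated $(1-\epsilon)^l/(1-l\epsilon)$. There is no real obstacle here; the only thing to watch is that the ``$\min$'' in the displayed definition of $S^+$ is read consistently with the claimed distribution (i.e.\ $S^+\ge 1$ always), otherwise the atom at $s=1$ would not pick up the $P(S=0)$ contribution.
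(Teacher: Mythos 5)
Your proof is correct and is exactly the direct computation the paper has in mind (the paper omits a proof, remarking only that it is ``straightforward'' given Theorem~\ref{maingeniedet}): $p_{S^+}(s)=p_S(s)$ for $s>1$, the atom $p_S(0)$ merges into $p_{S^+}(1)$, and $E(S^+)=E(S)+P(S=0)$ simplifies to $(1-\epsilon)^l/(1-l\epsilon)$. You are also right that the displayed definition $S^+=\min\{S,1\}$ is a typo for $\max\{S,1\}$ -- the $\min$ would give a $\{0,1\}$-valued variable, which is inconsistent with the stated distribution, with $E(S^+)>1$, and with the later use of $S^+$ as a renewal interval length in the proofs of Theorems~\ref{maindelay} and~\ref{theoremfailstr}.
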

         
Combining Theorem \ref{maingeniedet} and Corollary \ref{maingeniecorr}  with the following result allows us to obtain a simple closed-form bound on the mean in-order delivery delay:

\begin{figure}
\centering
\includegraphics[width=0.5\columnwidth]{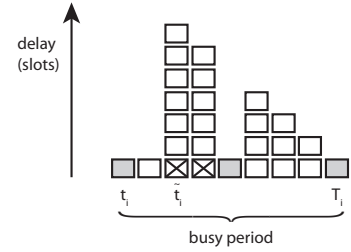}
\caption{Illustrating delay introduced by erasures.   In this example the information packet at slot $\tilde{t}_i$ is delayed by 6 slots, the information packet at slot $\tilde{t}_i+1$ by 5 slots, the information packet at $\tilde{t}_i+3$ by 3 slots and so on.   It can be seen that the sum-delay is the area under a triangle of base $T_i-t_i$ slots and height $T_i-t_i$ slots, less the area associated with any coded packets.}
\vspace{-10pt}
\label{fig:meandelayproof}
\end{figure}
\begin{theorem}[In-Order Delivery Delay]\label{maindelay}
At the receiver, the asymptotic mean in-order delivery delay for information packets is upper bounded by $\frac{E[S^2](l-1)}{2E[S^+]} $ slots.
\end{theorem}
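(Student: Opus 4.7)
The plan is to apply renewal–reward accounting to the i.i.d.\ sequence $\{S_j\}$ of busy/idle periods, combined with a triangular bound on the total in-order-delivery delay accrued inside each period.

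First I would bound the delay contribution of a single busy period. Fix a period with $S_j = s \geq 1$, spanning $L := sl$ slots whose coded-packet positions (relative to $t_{i(j)}$) are $l, 2l, \ldots, sl$ and whose remaining $s(l-1)$ positions carry information packets. Because in-order delivery does not resume until the terminal coded slot at relative position $L$, an information packet at relative position $p$ is delayed by at most $L - p$ slots; the truly delayed packets (at positions $\ge \tilde{t}_{i(j)} - t_{i(j)}$) meet this bound with equality, while the earlier ``free'' packets contribute strictly less, which only loosens the bound. The geometry sketched in Figure \ref{fig:meandelayproof} then yields the identity
\begin{equation}
\sum_{p=1}^{L}(L-p)\;-\;\sum_{k=1}^{s}(L - k l)\;=\;\tfrac{1}{2}\,s^{2}\,l(l-1),
\end{equation}
so the total in-order delay accrued in the period is at most $\tfrac{1}{2}\,l(l-1)\,S_j^{2}$. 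This bound also holds trivially for an idle period ($S_j = 0$).

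Second, I would invoke the elementary renewal–reward theorem on the i.i.d.\ periods. Each period has length $l\,S_j^{+}$ slots, whose mean $l\,E[S^{+}]$ and per-period reward $\tfrac{1}{2}l(l-1)\,E[S^{2}]$ are both finite whenever $l\epsilon<1$ by Theorem \ref{maingeniedet} and Corollary \ref{maingeniecorr}. Renewal–reward then yields
\begin{equation}
\limsup_{T\to\infty}\frac{\text{total in-order delay accrued in }[0,T]}{T}\;\le\;\frac{\tfrac{1}{2}\,l(l-1)\,E[S^{2}]}{l\,E[S^{+}]}\;=\;\frac{(l-1)\,E[S^{2}]}{2\,E[S^{+}]}\;\text{slots},
\end{equation}
which is exactly the claimed bound.

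The main obstacle is really the first step: the discrete sums must be kept honest so that the factor $l(l-1)$ emerges in the numerator, rather than the naive $l^{2}$ that a continuous triangle-area approximation would suggest, since this $(l-1)/l$ correction precisely reflects the density of information packets among all slots. Once the per-period bound $\tfrac{1}{2}\,l(l-1)\,S_j^{2}$ is in hand, the renewal–reward step is routine because Theorem \ref{maingeniedet} and Corollary \ref{maingeniecorr} have already supplied the finite moments $E[S^{+}]$ and $E[S^{2}]$ that it requires.
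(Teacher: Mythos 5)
Your proposal is correct and follows essentially the same route as the paper: a triangular bound on the per-busy-period sum-delay (yielding at most $\tfrac{1}{2}l(l-1)S_j^2$), followed by the elementary renewal--reward theorem applied to the i.i.d. busy/idle cycles of length $lS_j^+$ slots, giving $\tfrac{1}{2}l(l-1)E[S^2]\big/\bigl(lE[S^+]\bigr) = (l-1)E[S^2]\big/\bigl(2E[S^+]\bigr)$. Your per-period accounting is in fact a bit cleaner than the paper's: the paper writes the worst-case sum-delay as $\sum_{k=1}^{S_jl}k - \sum_{k=0}^{S_j-1}kl$, which actually evaluates to $\tfrac{1}{2}S_j^2l(l-1)+S_jl$ rather than being strictly less than $\tfrac{1}{2}S_j^2l(l-1)$, whereas your indexing $\sum_{p=1}^{L}(L-p)-\sum_{k=1}^{s}(L-kl)$ gives the exact value $\tfrac{1}{2}s^2l(l-1)$ and makes the bound airtight.
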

\begin{proof}
See Appendix.
\end{proof}
The comparison of this upper bound for in-order delay with simulated results is provided in Fig. \ref{fig:bound}. It can be seen that the upper bound is tight at both low and high coding rates. Furthermore, it is reasonably tight at intermediate coding rates. Despite its simple form, it is therefore quite powerful.

\begin{figure}
\centering
\includegraphics[width=0.9\columnwidth]{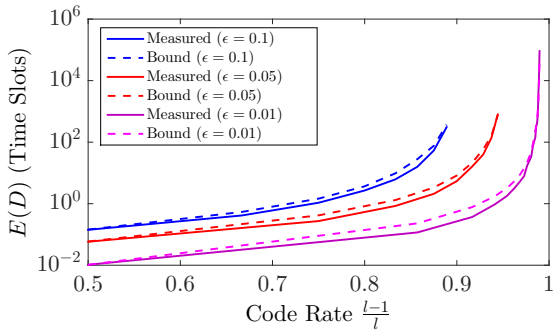}
\caption{Measured mean in-order delivery delay and upper bound versus code rate $\nicefrac{l-1}{l}$ for different i.i.d. packet erasure rates $\epsilon$.}
\vspace{-10pt}
\label{fig:bound}
\end{figure}

\vspace{-5pt}
\subsection{In-Order Delivery Delay where Coded Packets are Placed in Groups}

Before proceeding with determining the throughput of the low delay code, we first provide justification of why we chose to only send one code packet at a time to recover from losses. Consider the coding scheme where $c$ coded packets are transmitted after every $l_g-c$ information packets (\emph{i.e.} coded packets are transmitted in groups of $c$ packets).  Note that this reduces to the low-delay coding scheme when $c=1$.  


Following the same steps as before, we define the process $S_g$ as the busy period of the decoding operation.  While extending the proof technique used previously is difficult in this case, results from lattice theory can be used to prove the following:


\begin{theorem}[Busy Time for Group Coded Packets]\label{maingeniedetgroup}
In an erasure channel with erasure probability $\epsilon$, suppose we insert $c$ coded packets in between every $l_g-c$ information packets.  Assume that each coded packet can help us to recover from $c$ erasures.  
We have:

\vspace{5pt}
\noindent $I$.  For all values of $\epsilon$ and $l_g$ such that $l_g \epsilon < c$, the mean of the probability distribution of $S_g$ exists and is finite. 

\vspace{5pt}
\noindent $II$.
\begin{multline}
p_{S_g}\left(s\right) = \\
\begin{dcases}
\left(1-\epsilon \right)^{l_g-c} & \text{for } s=0 \\
\sum_{i=1}^c \sum_{j=0}^{c-i} {l_g - c \choose i} {c \choose j} f\left(\epsilon,i+j,l_g\right) & \text{for } s=1 \\
\sum_{i=0}^{c-1} NP\left(l_g,c,s,i\right) f\left(\epsilon,sc-i,sl_g\right) & \text{for } s>1 \\
0 & \text{otherwise.}
\end{dcases}
\end{multline}
where $f\left(\epsilon,x,y\right) = \epsilon^{x}\left(1-\epsilon\right)^{y-x}$ and the exact value of $NP(l_g,c,s,i)$ is computed in the proof of the theorem. 
\end{theorem}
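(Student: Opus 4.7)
The plan is to embed the group-coded busy period inside a discrete-time random walk at the group level, and then use a Dvoretzky--Motzkin style cycle/ballot argument to enumerate the admissible lattice paths. For group $k$ let $X_k\sim\mathrm{Bin}(l_g-c,\epsilon)$ be the number of information-packet erasures, $Y_k\sim\mathrm{Bin}(c,\epsilon)$ the number of coded-packet erasures, all independent. Set $E_k=X_k+Y_k$ and $D_k=\sum_{j=1}^{k}(E_j-c)$. Under the standing ``each received coded packet contributes one innovative dimension'' assumption, by the end of group $k$ every previously erased information packet is decodable iff $\sum_{j\le k}X_j\le\sum_{j\le k}(c-Y_j)$, i.e.\ iff $D_k\le 0$. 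A busy period that begins in group $1$ therefore has length $S_g=s$ exactly when $D_1,\dots,D_{s-1}>0$ and $D_s\le 0$. Since $E[E_1-c]=l_g\epsilon-c$ is strictly negative under the hypothesis $l_g\epsilon<c$, standard first-passage results for random walks with negative drift and bounded steps give that this stopping time is a.s.\ finite with finite mean, which is Part~I.

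For $s=0$ the event is $X_1=0$, contributing $(1-\epsilon)^{l_g-c}$. For $s=1$ the busy period starts and ends within the first group, so $X_1\ge 1$ and $E_1\le c$; parametrising $(X_1,Y_1)=(i,j)$ with $1\le i\le c$ and $0\le j\le c-i$, each configuration has probability $\binom{l_g-c}{i}\binom{c}{j}\epsilon^{i+j}(1-\epsilon)^{l_g-(i+j)}=\binom{l_g-c}{i}\binom{c}{j}f(\epsilon,i+j,l_g)$, which reproduces the stated sum. For $s>1$, $D_s-D_{s-1}=E_s-c\ge -c$ together with $D_{s-1}\ge 1$ forces the terminal deficit to lie in $\{-(c-1),\dots,0\}$, so setting $D_s=-i$ with $i\in\{0,\dots,c-1\}$ pins the total erasure count to $sc-i$. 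Every such configuration contributes the same weight $\epsilon^{sc-i}(1-\epsilon)^{sl_g-(sc-i)}=f(\epsilon,sc-i,sl_g)$, and $NP(l_g,c,s,i)$ is by definition the number of placements of $sc-i$ erasures among the $sl_g$ slots for which the partial erasure count strictly exceeds $kc$ at every intermediate group boundary $k=1,\dots,s-1$. I would compute $NP$ by first applying a cycle lemma to the group totals $(E_1,\dots,E_s)$ to count allowed orderings of a given multiset, then multiplying by $\prod_{k}\binom{l_g-c}{X_k}\binom{c}{Y_k}$ to account for internal placements within each group, and finally summing over compositions of $sc-i$ into $s$ admissible parts; as a sanity check the specialisation $c=1$ should collapse to the coefficient $\frac{l-1}{s}\binom{(s-1)l}{s-1}$ appearing in Theorem~\ref{maingeniedet}.

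The hardest step is the combinatorics of $NP(l_g,c,s,i)$. The classical cycle lemma is cleanest for $\pm 1$ walks, whereas here each group contributes a composite increment in $\{-c,\dots,l_g-c\}$ built from two independent binomial components, and the terminal condition $D_s\le 0$ is an inequality rather than first passage to a fixed level. Adapting the cyclic/reflection combinatorics to this bivariate, multi-step, half-space setting, and verifying the resulting closed form both by degenerating to $c=1$ and by checking that $\sum_{s\ge 0}p_{S_g}(s)=1$, is the bulk of the technical work.
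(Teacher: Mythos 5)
Your setup is exactly right and matches the paper: the group-level random walk $D_k=\sum_{j\le k}(E_j-c)$, the characterisation $\{S_g=s\}=\{D_1,\dots,D_{s-1}>0,\ D_s\le 0\}$, the $s=0$ and $s=1$ cases, the negative-drift argument for Part~I, and the reduction of $NP(l_g,c,s,i)$ to counting $\{0,1\}$ sequences of length $sl_g$ with $sc-i$ ones whose cumulative count strictly exceeds $kc$ at each group boundary $k<s$ (the paper writes the constraints slot by slot, but as you note the non-boundary ones are implied, so the two formulations are equivalent). Your observation that the terminal deficit lies in $\{0,\dots,c-1\}$ also correctly reproduces the paper's sum over $p$.

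The gap is the proposed use of a Dvoretzky--Motzkin cycle lemma to compute $NP$. This is precisely the step the paper's authors say they could not carry out: the proof opens with ``Unfortunately, we cannot extend Tanner's result to this case, which is equivalent to computing the busy time of a $G/D/c$ queue,'' and Tanner's $G/D/1$ analysis \emph{is} a cycle-lemma argument. It works there because the time-reversed increments $1-E_j$ are bounded above by $1$, which is the hypothesis the cycle lemma requires; for $c>1$ the reversed increments $c-E_j$ are only bounded above by $c$, and the cycle lemma is simply false once steps can exceed $1$ (for instance the step sequence $(3,-1)$ sums to $2$ but only one of its two cyclic shifts has all partial sums positive). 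Reflection-type corrections in this multi-step, half-space setting do not obviously patch this, and you do not supply them. The paper instead applies Kreweras' enumeration lemma (\cite{kreweras65}), which counts integer sequences dominated by a prescribed increasing sequence via a determinant/recursion, and this directly yields $NP(l_g,c,s,i)$ from the explicit inequality system on the gap sequence $\mu$. So while your proposal identifies the right quantity to count and correctly reduces the problem to a lattice-path enumeration, the counting tool you propose is the one the authors ruled out, and without replacing it (e.g., by Kreweras' lemma or an equivalent transfer-matrix/determinant method) the proof of Part~II for $s>1$ is not complete.
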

\begin{proof}
See Appendix.
\end{proof}

Theorem \ref{maingeniedetgroup} does not provide a closed-form expression of the mean in-order delay when coded packets are sent in groups, but it does allow numerical calculation of the in-order delivery delay. 

Fig. \ref{fig:groupcode} shows the calculated delay per information packet vs $c$.   In order to provide a fair comparison for different choices of $c$ the coding rate is held constant, \emph{i.e.} $1-\nicefrac{c}{l_g}=1-\nicefrac{1}{l}$ is held constant as $c$ is varied.  It can be seen that the delay is an increasing function of $c$. In other words, the delay is minimized when only a single coded packet is transmitted at a time (i.e., $c=1$). 
%
%
\begin{figure}
\centering
\includegraphics[width=0.9\columnwidth]{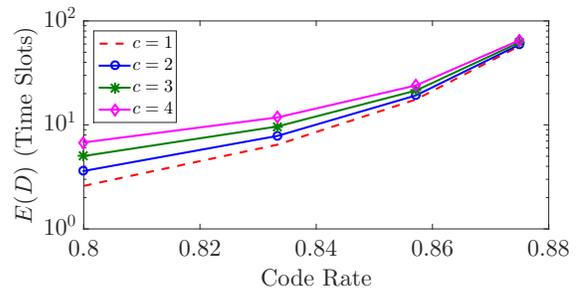}
\caption{Average in order delivery delay per information packet, $c$ is the number of coded packets at the end of each interval, $\epsilon = 0.1$.}
\vspace{-10pt}
\label{fig:groupcode}
\end{figure}

\vspace{-5pt}
\subsection{Throughput and Rate}
We can expect that the improved delay performance of the low delay coding scheme carries a throughput price.   However, it turns out that this price is a small one.  For a stream of $N$ packets, a decoding error may occur since it is possible that a burst of errors near the end of the stream may not allow for sufficient time to transmit the necessary coded packets to recover the lost information packets. That is, a number of information packets at the end of a transmission may be lost.   Define the good throughput $GT$ as the ratio of the number of information packets delivered to the receiver and the number of packets transmitted by the transmitter. The good throughput is a random variable and its behavior is characterized in the following theorem:
 \begin{theorem}[Throughput]\label{throughputthm}
 Consider the transmission of a coded stream of length $N$ over an erasure channel and that $l\epsilon<1$.  
 For any $R_0 < 1- \frac{1}{l}$ and $\delta >0 $, there exist an $N$ large enough such that 
 \begin{equation}
 \Pr{(GT>R_0)} > 1- \delta.
 \end{equation}
 \end{theorem}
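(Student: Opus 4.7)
The plan is to exploit the renewal decomposition introduced before Theorem \ref{maingeniedet}: the coded-packet slots split the transmission into i.i.d.\ busy/idle intervals, and within every \emph{completed} interval all information packets are eventually delivered in order. Attach to interval $j$ the random variable $S_j^+$ from Corollary \ref{maingeniecorr}; then interval $j$ occupies $l S_j^+$ transmitted slots and contains $(l-1) S_j^+$ information packets, each of which is delivered by the end of the interval. Under the standing hypothesis $l\epsilon<1$, Corollary \ref{maingeniecorr} supplies that $\{S_j^+\}$ is i.i.d.\ with finite mean $E[S^+]=(1-\epsilon)^l/(1-l\epsilon)$.

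First I would relate the slot count $N$ to the number $K_N$ of intervals completed by slot $N$. Writing $R_N = N - l\sum_{j=1}^{K_N} S_j^+$ for the number of slots occupied by the current unfinished interval, the number of information packets delivered in order by time $N$ is at least $(l-1)\sum_{j=1}^{K_N} S_j^+$, so
\begin{equation}
GT \;\geq\; \frac{l-1}{l}\left(1 - \frac{R_N}{N}\right).
\end{equation}
The problem therefore reduces to showing $R_N/N \to 0$ in probability as $N\to\infty$. Since $E[S^+]<\infty$, the elementary renewal theorem applied to the inter-renewal times $\{l S_j^+\}$ gives $K_N/N \to 1/(lE[S^+])$ almost surely, and the standard residual-lifetime estimate for renewal processes with integrable inter-renewal times gives $R_N/N \to 0$ almost surely. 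Combining these yields $GT \to (l-1)/l$ almost surely, which delivers the claim: for any $R_0 < (l-1)/l$ and any $\delta>0$, there exists $N$ large enough that $\Pr{(GT > R_0)} > 1-\delta$.

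The main obstacle I anticipate is controlling the unfinished tail $R_N$. A priori a heavy-tailed $S^+$ could let the last interval be comparable to $N$ itself and spoil the lower bound on $GT$. The rescue is exactly the hypothesis $l\epsilon<1$, which Corollary \ref{maingeniecorr} converts into $E[S^+]<\infty$; this integrability is precisely what the residual-lifetime estimate requires. Beyond this point the argument is a routine application of renewal theory, since all of the coding-specific combinatorics has already been absorbed into the closed-form distribution of $S^+$ derived earlier in the paper.
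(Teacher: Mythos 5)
Your proof is correct and reaches the conclusion by a genuinely different argument than the paper's. Both you and the paper start from the same renewal decomposition of the stream into busy/idle intervals, and both observe that the only throughput loss comes from the single unfinished interval at the end of the stream. From there, however, the paths diverge: the paper controls that last interval by appealing to the Chernoff-type tail estimate $\Pr(S>f)<\tfrac12 e^{-f(1-l\epsilon)^2/(l\epsilon(1-\epsilon))}$ already established in the proof of Theorem~\ref{maingeniedet} Part~I, choosing $f_\delta$ explicitly and then picking $N$ so that the boundary term $(l-1)f_\delta/N$ is below the slack $\gamma=(l-1)/l - R_0$. This gives a quantitative relationship between $N$, $\delta$, and $R_0$. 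Your route instead invokes the strong law for renewal processes ($K_N/N\to 1/(lE[S^+])$ a.s.) and the age estimate $R_N/N\to 0$ a.s.\ (which, as you note, follows once $E[S^+]<\infty$, itself guaranteed by $l\epsilon<1$ via Corollary~\ref{maingeniecorr}), concluding $GT\to(l-1)/l$ almost surely and hence in probability. Your version is more qualitative but arguably cleaner: it packages the ``tail-interval is small'' step inside the standard residual-age fact rather than re-deriving it with Chernoff, and as a side benefit it sidesteps the inspection-paradox subtlety that the paper's proof treats informally when it applies the unconditioned law of $S$ to the (size-biased) interval $S_{k+1}$ straddling the endpoint. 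The trade-off is that the paper's proof gives explicit rates in $N$, whereas yours delivers only the asymptotic statement required by the theorem.
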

 \begin{proof}
See Appendix.
\end{proof}

Recall that the rate of transmission of the low delay coding scheme is $\frac{l-1}{l} = 1 - \frac{1}{l}$ and that the capacity of the erasure channel is $1-\epsilon$, so the condition $l\epsilon<1$ allows all coding rates up to the channel capacity.   Theorem \ref{throughputthm} therefore tells us that the low delay code is asymptotically capacity achieving as $N\rightarrow \infty$.  

In other words, the fraction of information packets lost can be made arbitrarily small for sufficiently long transfers.   This is because  the length of the possible failure at the end of a transmission is independent of the length of the transmission. Therefore, the sacrifice in the good throughput becomes negligible as the length of the transmission grows. Of course, losing any information packet is undesirable. However, an easy extension of the coding scheme is to send a small number of additional coded packets at the end of a transfer (or alternatively to use feedback, see below).  This will allow any straggling information packet erasures to be recovered without adversely affecting throughput (or mean delay) when the connection size $N$ is large.

\vspace{-5pt}
\section{Decoding Failure Probability}
\label{sec:failure-probability}

\subsection{Computation of Decoding Failure Probability}
Recall that following an erasure at time $\tilde{t}_i$ the receiver pauses in-order delivery of packets. As packets are received, the receiver maintains a generator matrix, with $G_t$ denoting this generator matrix at time $t>\tilde{t}_i$.  This matrix contains $N_t \le t-\tilde{t}_i$ rows where $N_t$ is a random variable with value equal to the number of packets received in the interval from time $\tilde{t}_i$ to time $t$.  Reception of information packet $j$ adds a row to $G_t$ with element $j$ equal to 1 and all other elements equal to zero.  Reception of coded packet $i$ adds a row with elements 1 through $(l-1)i$ equal to coefficients $w_{ij}$ and other elements equal to zero.  When $G_t$ reaches full rank, decoding succeeds and all information packets sent up to time $t$ are recovered. An example of $G_t$ is shown in Figure \ref{fig:sliding-window}.

Until now, we have assumed that every received packet increases the rank of $G_t$. In fact there is a small probability that this does not occur since the coding coefficients are selected uniformly at random. For example, should the coefficients of two packets be the same, it is the same as if the received coded packet had been erased resulting in the appearance of a larger erasure rate. While obtaining an analytic expression for this increase in erasure rate is difficult, it can be readily calculated numerically. Table \ref{fig:decfailmod} shows calculated values of the decoding failure probability for a range of field sizes $Q$ and values of code rate with parameter $l$ on a channel with erasure rate $\epsilon=0.1$. The table shows that the decoding failure probability is extremely small, even for small field sizes such as $Q=2$ (i.e., the binary field). As a result, decoding failures can be neglected for most practical purposes and the simplifying assumption in the preceding section is reasonable.

\setlength\extrarowheight{3pt}
\begin{table}[htdp]
\begin{center}
\begin{tabular}{cc|c|c|c|c|c|}
\cline{3-7}
& & \multicolumn{5}{ c| }{$Q=2^{x}$} \\ \cline{3-7} 
& $x$ & 2& 3 & 4 & 5 & 6 \\ \cline{1-7}
\multicolumn{1}{ |c  }{\multirow{3}{*}{$l$} } & \multicolumn{1}{ |c| }{$5$} &  $10^{-16.71}$ &   $10^{-20.2}$ &   $\sim 0$  & $\sim 0$   &  $\sim 0$ \\ \cline{2-7}
\multicolumn{1}{ |c  }{} & \multicolumn{1}{ |c| }{$6$} &  $10^{-14.32}$ &   $10^{-16.22}$ &   $10^{-18.68}$ &  $10^{-22.74}$ &  $\sim 0$  \\ \cline{2-7}
\multicolumn{1}{ |c  }{} & \multicolumn{1}{ |c| }{$7$} & $10^{-12.81}$ &   $10^{-13.31}$ &   $10^{-16.21}$ &  $10^{-19.03}$ &  $\sim 0$\\ \cline{1-7}
\end{tabular}
\end{center}
\caption{Decoding failure probability per packet, $\epsilon =0.1$ and $\sim 0$ indicates zero to within numerical precision.}
\vspace{-10pt}
\label{fig:decfailmod}
\end{table}%

\vspace{-5pt}
\subsection{Analytic Bound}
While the numerical calculation is significantly less conservative, we note that it is also possible to upper bound the decoding failure probability. Suppose that $k$ erasures occur where the erasure pattern is admissible (as defined in Lemma \ref{lemmaTanner}{\footnote {Reminder:  The term \emph{admissible} is used here to mean that at least  two erasures should happen during a time $t=l$, at least three within $2l$, and so on, so that the decoding process will remain activated for the whole of the time $t= kl$.}}), and decoding is attempted after receiving exactly $k$ coded packets. Let $E_1, E_2, \dots , E_k$ denote the number of erasures in each $l$-interval, and $G'\in\mathbb{R}^{k \times k}$ be the admissible decoding matrix obtained by removing the rows and columns in generator matrix $G_{kl}$ that are associated with received information packets so that the dimension of the decoding matrix is equal to the number of erasures.  The number of non-zero elements in each row $i$ is equal to $\sum_{z=1}^i E_z$. Furthermore, the number of erasures in any $k'$ $l$-interval for any $k'$ less than $k$ is strictly greater than $k'$ since the decoding does not stop before coded packet $k$.



 \begin{theorem}[Decoding failure for $S=k$]\label{theoremfailure}
 Consider an admissible decoding matrix $G'$, and assume that its elements are drawn identically and independently, uniformly at random from a field with size $Q$. The probability that the matrix is full rank is bounded as follows:
 \begin{align}
 \Pr ( \textrm{rank}(G') = k ) & \leq \prod_{j=0}^{k-1}\left(1-\frac{1}{Q^{k-j}}\right) \\
  \Pr ( \textrm{rank}(G') = k ) & \geq \frac{Q}{Q+1} \left(1-\frac{1}{Q^2}\right)^{k}. 
 \end{align}
 \end{theorem}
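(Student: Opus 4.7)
The plan is to obtain an exact product formula for $\Pr(\textrm{rank}(G')=k)$ by processing one row at a time, and then read off both bounds by feeding in the structural constraints on the cumulative erasure counts $n_i:=\sum_{z=1}^{i}E_z$. Observe from the setup that $n_1\le n_2\le\cdots\le n_k=k$, that admissibility forces $n_i\ge i+1$ for every $i<k$, and that the $i$-th row of $G'$ is uniformly distributed over the subspace $S_i\subseteq\mathbb{F}_Q^k$ of vectors supported on the first $n_i$ coordinates, the rows being mutually independent since their non-zero entries are drawn i.i.d.\ from $\mathbb{F}_Q$.

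Now let $V_{i-1}$ denote the row span of the first $i-1$ rows. Because each of those rows is supported on the first $n_{i-1}\le n_i$ coordinates, $V_{i-1}\subseteq S_i$. Conditioning on rows $1,\dots,i-1$ being linearly independent, $|V_{i-1}|=Q^{i-1}$, and by the independence of row $i$ from its predecessors the $i$-th row lies in $V_{i-1}$ with conditional probability $|V_{i-1}|/|S_i|=Q^{i-1-n_i}$, independent of the specific values taken by earlier rows. A one-line induction therefore gives the exact identity
\[
\Pr(\textrm{rank}(G')=k)=\prod_{i=1}^{k}\left(1-\frac{1}{Q^{n_i-i+1}}\right).
\]
From $n_i\le n_k=k$ we get $n_i-i+1\le k-i+1$, so each factor is bounded above by $1-Q^{-(k-i+1)}$; reindexing with $j=k-i$ produces the claimed upper bound $\prod_{j=0}^{k-1}(1-Q^{-(k-j)})$. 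For the lower bound, admissibility yields $n_i-i+1\ge 2$ for every $i<k$ while $n_k-k+1=1$, so the product is at least $(1-1/Q)(1-1/Q^2)^{k-1}$, and the algebraic identity $1-1/Q=\frac{Q}{Q+1}(1-1/Q^2)$ rewrites this as $\frac{Q}{Q+1}(1-1/Q^2)^{k}$.

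The only non-routine step is justifying that the conditional probability appearing in the induction is both exact and independent of the specific previous rows; this relies on the nested inclusion $V_{i-1}\subseteq S_i$, which in turn is a consequence of the causal, monotone structure of the coding windows (so that supports only grow as $i$ increases). Once that observation is in place, the upper bound is a pointwise monotonicity argument using the bookkeeping constraint $n_i\le k$, while the lower bound is a pointwise argument using the admissibility constraint $n_i\ge i+1$ for $i<k$ followed by a short algebraic manipulation. I do not anticipate any subtler combinatorial obstacle, since the structured support pattern interacts with the row span in exactly the way required to make the conditional cardinality $|V_{i-1}\cap S_i|$ collapse to $|V_{i-1}|$.
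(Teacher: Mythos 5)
Your proof is correct, and it is essentially the transpose of the paper's argument. The paper works column-by-column from the last column backwards, using the fact that column $i$ has $Z_i$ forced zeros at the top (with $Z_i$ nondecreasing, so the span of later columns sits inside the ambient space of the current column), obtaining the exact formula $\prod_{i=1}^k\bigl(1-Q^{Z_i-i}\bigr)$ and then identifying the extremal erasure patterns. You instead work row-by-row forward, using the nested supports $S_1\subseteq S_2\subseteq\cdots$ (equivalently, $n_i$ nondecreasing) to guarantee $V_{i-1}\subseteq S_i$ and obtain the exact formula $\prod_{i=1}^k\bigl(1-Q^{-(n_i-i+1)}\bigr)$. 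One can check these two formulas are identical (the multisets of exponents agree by the duality between $Z_i=\#\{j:n_j<i\}$ and $n_i$), and both your bounding steps land on exactly the extremal patterns the paper uses: $E_1=k,\,E_{>1}=0$ (all $n_i=k$, all $Z_i=0$) for the upper bound, and $E_1=2,\,E_2=\cdots=E_{k-1}=1,\,E_k=0$ ($n_i=i+1$ for $i<k$) for the lower bound. Your approach is arguably a little cleaner to state since the admissibility constraint $n_i\ge i+1$ appears directly in the exponent rather than being translated into the $Z_i$ sequence; the paper's version, working from the last column, makes the "must be nonzero / must escape a one-dimensional subspace / ..." story slightly more visual. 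Neither approach buys extra generality; this is a genuine transposition of the same argument.
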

\begin{proof}
See Appendix.
\end{proof}

Note that these upper and lower bounds coincide for the cases where $k=1$ and $k=2$ since there exists only one admissible decoding matrix in each case.

\begin{theorem}[Decoding failure for a stream of length $N$]\label{theoremfailstr}
The decoding failure(DF) probability in a stream of length $N$ satisfies
\begin{multline}
\lim_{N_t \to \infty} \frac{1}{N_t} Pr \left(\textrm{DF}\right)  \leq \\ \frac{\left(1-l \epsilon\right)}{l\left(1-\epsilon\right)^{l}}  
\sum_{i=1}^{\infty} \left(1-\frac{Q}{Q+1} \left(1-  \frac{1}{Q^{2}} \right)^{i}\right) p_S\left(i\right).
\end{multline}
\end{theorem}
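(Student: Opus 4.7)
The plan is to reduce the long-run decoding-failure rate to a renewal--reward computation over the sequence of busy/idle cycles from Section~\ref{sec:inorder-delay}. Under $l\epsilon<1$, these cycles are i.i.d., and the real-time length of the $j$th cycle is $l\max(S_j,1)=lS_j^{+}$ slots, so by Corollary~\ref{maingeniecorr} the mean cycle length is $lE[S^{+}]=l(1-\epsilon)^{l}/(1-l\epsilon)$, which is finite. A decoding attempt occurs only at the end of a busy cycle, so at most one decoding-failure event is contributed per cycle and the per-cycle reward is bounded by $1$.

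The next step is to apply Theorem~\ref{theoremfailure} conditionally on the cycle type. Conditional on $S_j=i$ with $i\ge 1$, the busy period contains exactly $i$ coded packets and $i$ erasures in its $il$ slots, with the number of erasures strictly exceeding the number of received coded packets on every proper prefix---otherwise the period would have ended earlier. This is precisely the admissibility hypothesis used in Theorem~\ref{theoremfailure}, so the associated $i\times i$ decoding submatrix satisfies its hypotheses and
\[
\Pr(\text{DF}\mid S_j=i)\le 1-\tfrac{Q}{Q+1}\bigl(1-1/Q^{2}\bigr)^{i}.
\]
Averaging against the distribution $p_S$ from Theorem~\ref{maingeniedet} then gives an upper bound of $\sum_{i=1}^{\infty}\bigl(1-\tfrac{Q}{Q+1}(1-1/Q^{2})^{i}\bigr)p_S(i)$ on the expected number of decoding failures per cycle.

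Finally, I would invoke the renewal--reward theorem to conclude
\[
\lim_{N_t\to\infty}\frac{1}{N_t}\,E\bigl[\#\text{DF in first }N_t\text{ slots}\bigr]=\frac{E[\text{failures per cycle}]}{E[\text{cycle length}]},
\]
and substitute the two expressions above. The prefactor $(1-l\epsilon)/\bigl(l(1-\epsilon)^{l}\bigr)$ emerges directly from $1/(lE[S^{+}])$, yielding the claimed bound.

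The main obstacle I anticipate is the second step: I will need to verify carefully that the portion of the random generator matrix associated with a given cycle inherits the i.i.d.\ uniform distribution on $\mathbb{F}_Q$ even after conditioning on $\{S_j=i\}$, since that event depends on the erasure pattern (but not on the coding coefficients, so the measurability/independence point should go through cleanly). Once this is settled and admissibility is confirmed, the renewal--reward calculation is routine, because the per-cycle reward is bounded by $1$ and the cycle length has finite mean whenever $l\epsilon<1$.
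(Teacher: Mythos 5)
Your proposal follows essentially the same route as the paper: you set up a renewal--reward process over the busy/idle cycles with cycle length $lS_j^+$, use Corollary~\ref{maingeniecorr} to get $E[S^+]=(1-\epsilon)^l/(1-l\epsilon)$, bound the per-cycle failure reward by conditioning on $S=i$ and applying the lower bound of Theorem~\ref{theoremfailure}, and then divide by $lE[S^+]$ to produce the prefactor $(1-l\epsilon)/(l(1-\epsilon)^l)$. Your added remarks about confirming admissibility of the busy-period erasure pattern and the independence of the coding coefficients from the event $\{S_j=i\}$ are sensible diligence, but they do not change the argument, which matches the paper's.
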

\begin{proof}
See Appendix.
\end{proof}

\begin{corollary}[Decoding Failure Probability] \label{corollaryfailure}
The decoding failure probability satisfies
\begin{multline}
\lim_{N_t \to \infty} \frac{1}{N_t} Pr (\textrm{DF})  \leq  \frac{\left(1-l \epsilon\right)}{l\left(1-\epsilon\right)^{l}} \cdot \\
 \quad  \cdot \left( \frac{Q}{Q+1} \left(1-\epsilon_0\right)^{l-1} - \left(1-\epsilon\right)^{l-1} + \frac{1}{Q+1} \right), 
 \end{multline}
where $\epsilon_0$ is the solution to the equation
\begin{equation}
\epsilon_0 \left(1- \epsilon_0\right)^{l-1}=\left(1-\frac{1}{Q^2}\right) \epsilon \left(1- \epsilon\right)^{l-1}.
\end{equation}
Since the function $f\left(\epsilon\right) = \epsilon \left(1- \epsilon\right)^{l-1}$ is increasing for $\epsilon < \nicefrac{1}{l}$, the solution always exists and $\epsilon_0 < \epsilon$. 
\end{corollary}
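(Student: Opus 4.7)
The plan is to start from the bound in Theorem~\ref{theoremfailstr} and evaluate its infinite sum in closed form by exploiting a scaling property of $p_S$. Writing $\alpha=1-1/Q^2$, I would first split
\begin{align*}
\sum_{i=1}^{\infty}\Bigl(1-\tfrac{Q}{Q+1}\alpha^{i}\Bigr)p_S(i)
 = \sum_{i=1}^{\infty}p_S(i)-\tfrac{Q}{Q+1}\sum_{i=1}^{\infty}\alpha^{i}p_S(i).
\end{align*}
The first sum equals $1-p_S(0)=1-(1-\epsilon)^{l-1}$ by Theorem~\ref{maingeniedet}, and the prefactor $(1-l\epsilon)/[l(1-\epsilon)^{l}]$ from Theorem~\ref{theoremfailstr} is carried along unchanged throughout.

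The heart of the argument is the observation that the piecewise expressions of Theorem~\ref{maingeniedet} can be written uniformly, for every $s\geq 1$, as
\begin{equation*}
p_S(s;\epsilon)=a_s\bigl[\epsilon(1-\epsilon)^{l-1}\bigr]^{s},
\qquad a_s=\tfrac{l-1}{s}\binom{(s-1)l}{s-1},
\end{equation*}
so that the entire $\epsilon$-dependence collapses into the single function $f(\epsilon):=\epsilon(1-\epsilon)^{l-1}$ (the $s=1$ case gives $a_1=l-1$, matching $p_S(1)$). Consequently $\alpha^{i}p_S(i;\epsilon)=a_i\bigl[\alpha f(\epsilon)\bigr]^{i}$. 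Defining $\epsilon_0$ as the solution of $f(\epsilon_0)=\alpha f(\epsilon)$, each term of the second sum coincides with $p_S(i;\epsilon_0)$, and therefore $\sum_{i\geq 1}\alpha^{i}p_S(i;\epsilon)=1-(1-\epsilon_0)^{l-1}$. Substituting both evaluations back into the split expression yields $1-(1-\epsilon)^{l-1}-\tfrac{Q}{Q+1}\bigl[1-(1-\epsilon_0)^{l-1}\bigr]$, and a one-line rearrangement produces exactly $\tfrac{Q}{Q+1}(1-\epsilon_0)^{l-1}-(1-\epsilon)^{l-1}+\tfrac{1}{Q+1}$, matching the statement of the corollary.

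Finally, to justify existence, uniqueness, and the inequality $\epsilon_0<\epsilon$, I would differentiate: $f'(\epsilon)=(1-\epsilon)^{l-2}(1-l\epsilon)$, which is strictly positive on $(0,1/l)$. Under the standing hypothesis $l\epsilon<1$, $f$ is therefore strictly increasing on this interval with $f(0)=0$, so continuity produces a unique $\epsilon_0\in(0,\epsilon)$ with $f(\epsilon_0)=\alpha f(\epsilon)$ whenever $\alpha<1$. I do not expect any real obstacle: once the product form $p_S(s;\epsilon)=a_s f(\epsilon)^{s}$ is noticed, the reparametrisation $\epsilon\mapsto\epsilon_0$ does all the work and the remainder is elementary algebra. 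The only point that warrants a remark is that the closed form of Theorem~\ref{maingeniedet} must remain valid at $\epsilon_0$, which is automatic because $\epsilon_0<\epsilon<1/l$ gives $l\epsilon_0<1$.
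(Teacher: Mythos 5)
Your proposal is correct and follows essentially the same route as the paper's own proof: both observe that $p_S(s)=a_s f(\epsilon)^s$ for $s\ge 1$ with $f(\epsilon)=\epsilon(1-\epsilon)^{l-1}$, reparametrise the geometrically-weighted sum via $f(\epsilon_0)=(1-1/Q^2)f(\epsilon)$, invoke the normalisation identity $\sum_{s\ge 1}p_S(s;\cdot)=1-(1-\cdot)^{l-1}$ at both $\epsilon$ and $\epsilon_0$, and close with the monotonicity of $f$ on $(0,1/l)$ to justify $\epsilon_0$.
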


As $Q$ becomes larger, the decoding failure probability goes to zero. Table \ref{tab:dfbound} gives values for the upper bound in Corollary \ref{corollaryfailure} for a channel with erasure probability $\epsilon=0.1$.  Comparing with Table \ref{fig:decfailmod} it can be seen that the bound is not tight.

\begin{table}[htdp]
\caption{The upper bound on decoding failure probability per packet, $\epsilon =0.1$}
\begin{center}
\begin{tabular}{cc|c|c|c|c|c|}
\cline{3-7}
& & \multicolumn{5}{ c| }{$Q=2^{8x}$} \\ \cline{3-7} 
& $x$ & $1$ & $2$ & $4$ & $10$ & $20$ \\ \cline{1-7}
\multicolumn{1}{ |c  }{\multirow{3}{*}{$l$} } & \multicolumn{1}{ |c| }{$5$} &   $10^{-2.24}$ &   $10^{-4.65}$ &   $10^{-9.46}$ &  $10^{-23.45}$ &  $10^{-47.53}$ 
 \\ \cline{2-7}
\multicolumn{1}{ |c  }{} & \multicolumn{1}{ |c| }{$6$} &  $10^{-2.13}$ &   $10^{-4.54}$ &   $10^{-9.36}$ &  $10^{-23.42}$ &  $10^{-47.51}$  \\ \cline{2-7}
\multicolumn{1}{ |c  }{} & \multicolumn{1}{ |c| }{$7$} & $10^{-2.09}$ &   $10^{-4.50}$ &   $10^{-9.31}$ &  $10^{-23.43}$ &  $10^{-47.52}$   \\ \cline{1-7}
\end{tabular}
\end{center}
\vspace{-15pt}
\label{tab:dfbound}
\end{table}%



\vspace{-5pt}
\section{Encoding and Decoding Complexity}
\label{sec:complexity}
The encoding and decoding complexity is dependent on the management of the coding window. As already noted in Section \ref{sec:lowdelaycoding}, to limit the complexity we can use a sliding window approach that keeps track of the decoding process at the receiver. This results in a complexity that is polynomial in $E(S)$ and is considered in more detail below. Alternatively, the encoding and decoding might be constrained to the $k l$ preceding information packets. This will result in a small probability of decoding failure $\Pr(S>k)$, but this will go to zero exponentially as $k$ grows (see Theorem \ref{maingeniedet}).  

The complexity of the sliding window scheme depends on the process $S$. Assuming that $S=k$, Gaussian elimination can be performed at each step requiring approximately $\frac{2 k^3}{3}$ arithmetic operations.
 Using the elementary renewal theorem, we have:

\begin{theorem}
The total number of arithmetic operation $C_d$ in a stream of length $N_t$ which consists of $N_i = \nicefrac{(l-1)N_t}{l}$ information packets, satisfies the following:
\begin{equation}
\lim_{N_i \to \infty} \frac{1}{N_i} C_d = \frac{3}{2} \frac{1- l\epsilon}{\left(l-1\right)\left(1-\epsilon\right)^l} E\left(S^3\right),
\end{equation}
where 
\begin{multline}
E\left(S^3\right) = \frac{l\left(l-1\right)\epsilon^2 \left(1-\epsilon\right)^l \left(2 - 2 \epsilon - 2 l \epsilon^2 + l \epsilon + l^2 \epsilon^3\right)}{\left(1-l\epsilon \right)^5} \\
+ E\left(S^2\right).
\end{multline}
\end{theorem}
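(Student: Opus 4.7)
The plan is to exploit the i.i.d.\ renewal structure of the busy/idle periods identified in Section~\ref{sec:inorder-delay} and apply a renewal-reward argument, reducing the limit to $E[S^3]$, which in turn is evaluated using the generating function implicit in the distribution of $S$ from Theorem~\ref{maingeniedet}.

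First, I would set up the renewal accounting. Each period contains $S_j^+ = \max(S_j,1)$ coded-packet slots (a single coded slot for an idle period, and $S_j$ coded slots for a busy period of length $S_j$). Since one coded packet is inserted every $l$ slots, the number of coded-packet slots in a stream of $N_t$ slots is $N_t/l$, so if $N_p$ denotes the number of periods in the stream, then $\sum_{j=1}^{N_p} S_j^+ = N_t/l$. Because $E[S^+] < \infty$ by Corollary~\ref{maingeniecorr}, the strong law yields $N_p/N_t \to 1/(l\, E[S^+])$ almost surely.

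Next I would sum the arithmetic cost period by period. Idle periods contribute nothing and a busy period with $S_j = k$ contributes the stated Gaussian-elimination cost proportional to $k^3$, so $C_d = \sum_{j=1}^{N_p} \phi(S_j)$ with $\phi(k) \propto k^3$ and $\phi(0) = 0$. A second application of the strong law gives $C_d/N_p \to c\, E[S^3]$ for the appropriate constant $c$, and chaining the two limits yields $C_d/N_t \to c\, E[S^3]/(l\, E[S^+])$. Translating from slots to information packets through $N_i = (l-1)N_t/l$, and substituting the closed form $E[S^+] = (1-\epsilon)^l/(1-l\epsilon)$ from Corollary~\ref{maingeniecorr}, produces the stated prefactor $\frac{1-l\epsilon}{(l-1)(1-\epsilon)^l}\, E[S^3]$ multiplied by the algorithmic constant.

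The main obstacle is the closed-form evaluation of $E[S^3]$. Setting $u = \epsilon(1-\epsilon)^{l-1}$, the distribution in Theorem~\ref{maingeniedet} becomes $p_S(s) = \frac{l-1}{s}\binom{(s-1)l}{s-1} u^s$ for $s \geq 1$, and hence $E[S^k] = (l-1)\, u \sum_{m\geq 0} (m+1)^{k-1}\binom{ml}{m} u^m$. The base identity $\sum_{m\geq 0} \binom{ml}{m} u^m = 1/(1-l\epsilon)$ is the moment-generating kernel underlying $E[S]$ in Theorem~\ref{maingeniedet}. Applying the operator $u\,\frac{d}{du}$ twice, and using the implicit relation $du/d\epsilon = (1 - l\epsilon)(1-\epsilon)^{l-2}$ to convert derivatives in $u$ into derivatives in $\epsilon$, first recovers $E[S^2]$ as in Theorem~\ref{maingeniedet} and then yields $E[S^3]$; collecting factors of $(1-l\epsilon)$ in the denominator produces the closed form stated in the theorem. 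The algebra is the only genuinely non-trivial computation, since the renewal framework is structurally identical to the ones used for Theorem~\ref{maindelay} and Theorem~\ref{throughputthm}.
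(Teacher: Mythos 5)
Your proof takes the same route as the paper's one-sentence proof: apply the renewal-reward argument of Theorem~\ref{theoremfailstr} to reduce the limit to a constant multiple of $E[S^3]/\bigl((l-1)E[S^+]\bigr)$, then obtain $E[S^3]$ by the same differentiation trick used for $E(S^2)$ in Theorem~\ref{maingeniedet} (your $u\,\tfrac{d}{du}$ / generating-function phrasing is just a cleaner way of organising that same calculation). Two side remarks worth recording: you correctly read $S^+$ as $\max\{S,1\}$ despite the $\min$ typo in the paper's definition (the paper's own $p_{S^+}$ and $E(S^+)$ formulas confirm $\max$), and the ``algorithmic constant'' you leave implicit should come out to $\tfrac{2}{3}$ given the stated Gaussian-elimination cost $\tfrac{2k^3}{3}$, so the $\tfrac{3}{2}$ prefactor in the theorem statement appears to be a transcription slip in the paper rather than anything your argument would produce.
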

\begin{proof}
The proof is similar to the proof of the theorem \ref{theoremfailstr} where the computation of $E(S^3)$ is similar to the computation of $E(S^2)$ in Theorem \ref{maingeniedet}. 
\end{proof}

Note that the number of arithmetic operations per information packet becomes large as the code rate approaches the capacity. However, the complexity in the low delay regime (i.e., the regime where the low delay code should be used) results in a complexity that is manageable (see Table \ref{tab:complexity}). 


\begin{table}[htdp]
\caption{The average number of arithmetic operations performed in the decoder per information packet}
\begin{center}
\begin{tabular}{cc|c|c|c|c|c|}
\cline{3-7}
& & \multicolumn{5}{ c| }{$l=\nicefrac{x}{\epsilon}$} \\ \cline{3-7} 
& $x$ & $0.5$ & $0.6$ & $0.7$ & $0.8$ & $0.9$ \\ \cline{1-7}
\multicolumn{1}{ |c  }{\multirow{2}{*}{$\epsilon$} } & \multicolumn{1}{ |c| }{$0.02$} &   $0.67$ &   $1.93$ &   $7.11$ &  $41.74$ &  $769.58$ 
 \\ \cline{2-7}
\multicolumn{1}{ |c  }{} & \multicolumn{1}{ |c| }{$0.1$} &  $3.13$ & $8.87$ &   $32.56$ &  $190.96$ &  $3525$  \\ \cline{1-7}
\end{tabular}
\end{center}
\vspace{-15pt}
\label{tab:complexity}
\end{table}


\vspace{-5pt}
\section{Low Delay Code vs. Block Codes}
\label{sec:performancecomp}
This section uses stochastic simulations and experimental results to supplement the analysis provided in previous sections.  While the low delay code construction falls within the category of streaming codes, we compare its performance with systematic block codes because of their widespread use and so that the block code can be used as a baseline for comparison with other streaming codes. Many of which also use block codes for purposes of comparison. We consider both the case when feedback is not used to quickly recover from packet losses (i.e., open-loop) and the case when it is used (i.e., closed-loop).


\vspace{-5pt}
\subsection{Simulation Results}

The block and low delay codes are generated as described in Sections \ref{sec:blockcodes} and \ref{sec:lowdelaycoding} respectively. We assume that the alphabet size $Q$ for both codes is large enough so that the probability of receiving linearly dependent packets is zero. Therefore, a decoding error only occurs when the channel/network erases more than $n-k$ packets when a block code is used where $k$ is the block size and $c=\nicefrac{k}{n}$ is the code rate. 

Simulations are carried out for i.i.d. erasures with packet erasure probability $\epsilon$; and for correlated packet erasures described by a two state Markov chain. The Markov chain has a ``good'' state (i.e., state $G$) with packet erasure rate $\epsilon=0$, a ``bad" state (i.e., state $B$) with packet erasure rate $\epsilon=1$, and the transition probability matrix
\begin{equation}
P_{GC}=\begin{bmatrix}
			1-\gamma & \gamma \\
			\beta & 1-\beta
		\end{bmatrix}.
\end{equation}
Two parameters are used to generate the transition probabilities $\beta$ and $\gamma$: the steady-state probability of state $B$, $\pi_B=\nicefrac{\gamma}{\gamma+\beta}$; and the expected burst length, $E\left(L\right)=\nicefrac{1}{\beta}$. Within the figures presented in this section, we will refer to the i.i.d. model by referencing either erasure rate $\epsilon$ or the 2-tuple $(\pi_B, E(L)=1)$ (note that $\pi_B$ equals the erasure rate). When the correlated loss model is assumed, the 2-tuple $(\pi_B,E(L)>1)$ will be used.

\begin{figure}
\centering
\subfigure[I.I.D. Packet Losses with $\epsilon=0.05$]{
\includegraphics[width=0.95\columnwidth]{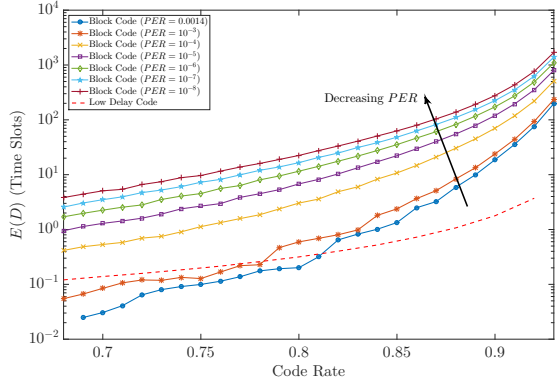}}
\subfigure[Correlated Packet Losses with $\pi_B=0.1$]{
\includegraphics[width=0.95\columnwidth]{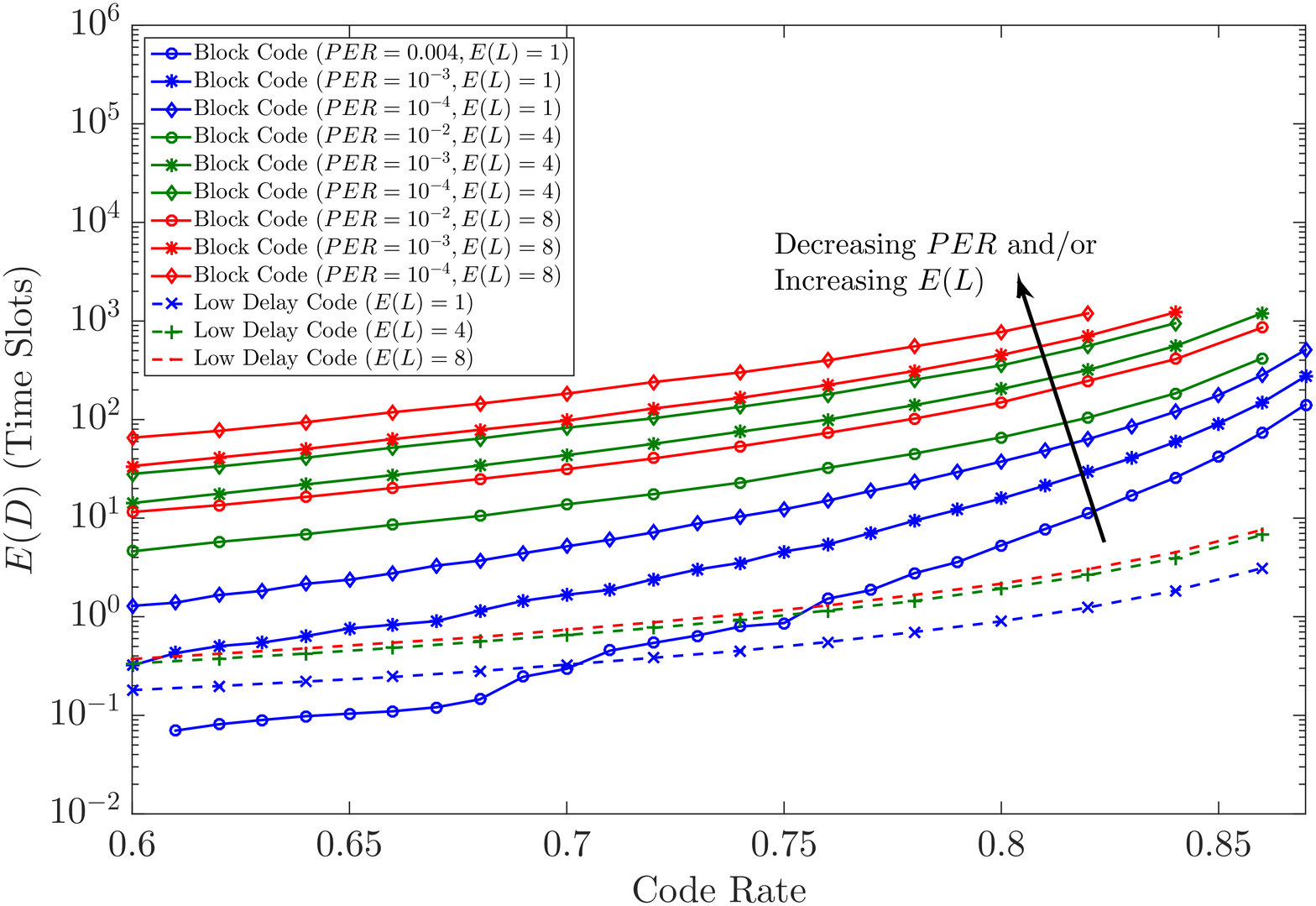}}
\caption{Open-loop mean in-order delivery delay, $E(D)$, versus code rate for a systematic block code and the low delay code ($E(L)$ is the expected packet erasure burst duration).}
\vspace{-15pt}
\label{fig:vsopenloop}
\end{figure}

We first consider the open-loop case where feedback communicating the successful reception of a packet is unavailable. In this case there is always non-zero probability of decoding failure for the block code, corresponding to the event that the number of packet losses over a block exceeds the number of codes packets in the block.  Therefore, there remains a non-zero packet erasure rate ($PER$) after coding is applied.  For the low-delay block code this probability is close to zero for sufficiently large connections (see Section \ref{sec:failure-probability}). Specifically, the $PER$ of the low delay code is essentially zero.

A comparison of the two codes is shown in Figure \ref{fig:vsopenloop}(a) where the mean in-order delivery delay $E(D)$ is plotted as a function of the coding rate for $\pi_B=0.05$ and $\pi_B=0.1$ when $E(L)=1$ (i.e., the packet erasures are i.i.d.).  A similar comparison is shown in Figure \ref{fig:vsopenloop}(b) for correlated packets losses where $E(L)\geq 1$.  Each solid line shows the delay of the block code vs the coding rate when the block size is adjusted to hold the packet erasure rate constant (as the coding rate increases the block size must also grow to achieve the same $PER$). The mean in-order delivery delay for the low delay code is shown as a dotted line.



\begin{figure}
\centering
\subfigure[$\epsilon=0.05$]{
\includegraphics[width=0.95\columnwidth]{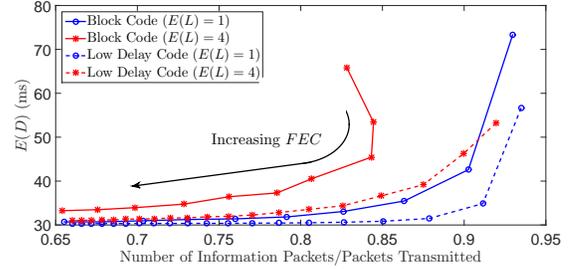}}
\subfigure[$\epsilon=0.1$]{
\includegraphics[width=0.95\columnwidth]{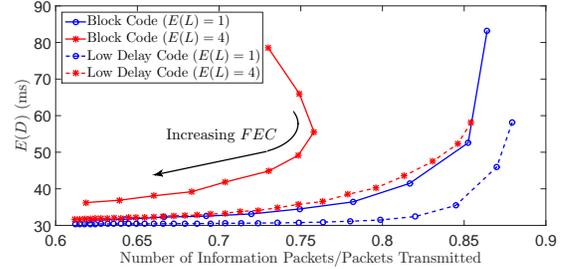}}
\caption{Mean in-order delivery delay, $E(D)$, versus code rate on a $25$ $Mbps$ link with an $RTT$ of $60$ $ms$. Both a systematic block code and the low delay code are shown where feedback is used to signal retransmissions and packet erasures are correlated ($E(L)$ is the expected packet erasure burst duration). The non-uniqueness in the abscissa for the block code when $E(L)=4$ occurs when the probability of decoding each generation or block without needing to retransmit additional degrees of freedom is very low.   
}
\vspace{-15pt}
\label{fig:vsclosedloop_corr}
\end{figure}

It can be seen that the low delay code achieves a smaller in-order delivery delay than block codes for the cases that are of the most interest.  The reduction in delay is substantial, being on the order of a magnitude or more for any given code rate.  The regime where this is not the case is when the rate of decoding failures for the block codes is large (e.g. $PER \geq 10^{-3}$ when compared to a channel packet erasure rate of $0.05$) and the coding rate is small (e.g., $c\leq 0.8$).  Recall that the low-delay code has a $PER\approx 0$, so the delay comparison is not really fair in within this regime.  Also note that the block code in this regime typically has a  block size of only 1 to 3 packets, which is far smaller than is usual for block codes.  As both the code rate and block size are increased, quantization due to these small block sizes results in the fluctuations shown in the delay-rate curves within Figure \ref{fig:vsopenloop}.

It can also be seen from Figure \ref{fig:vsopenloop}(b) that the block code's in-order delivery delay is much more sensitive to correlated losses than the low delay code's delay. This is a result of the low-delay code removing the requirement to partition the packet stream into blocks or generations prior to coding.

Simulation results for the closed-loop case are shown in Figure \ref{fig:vsclosedloop_corr}. The major difference between this and Figure \ref{fig:vsopenloop} is that feedback is used to help communicate the receiver's need for additional degrees of freedom. When considering the block code, feedback is used to initiate retransmissions in the form of coded packets if a block cannot be decoded. These retransmissions occur until every block can be decoded and delivered.  When considering the low delay code, feedback is used to adjust the code rate to ensure frequent decoding opportunities. 

While the gain in delay is not as pronounced as the open-loop case,
the low delay code achieves a lower in-order delivery delay than the block code over the entire range of code rates (measured as the total number of information packets divided by the total number of both transmitted information and coded packets). Furthermore, the figure highlights the inability of block codes to recover from correlated losses. This is shown by the non-uniqueness in the abscissa. Each curve is generated by increasing the forward error correction (FEC) code rate. For smaller FEC rates where a decoding error occurs, retransmissions are necessary resulting in larger delays. As the FEC is increased, the probability of requiring retransmissions to decode a block decreases resulting in lower delay.  




\vspace{-5pt}
\subsection{Experimental Results}

We implemented both the low delay code and the systematic random linear block code within the coded TCP transport protocol, CTCP \cite{kim}.  Use of error-correction coding at the transport layer to reduce delay was one of the original motivations for the present work and is currently the subject of much interest, (e.g., as part of the Google QUIC protocol). Since delayed ACK feedback is available at the transport layer, the setup is similar to the closed-loop case above. The congestion control used in CTCP was disabled by fixing the congestion window ($cwnd$) to be equal to the path bandwidth-delay product (BDP) in each of the experiments in order to focus on the coding performance.



\begin{figure}
\centering
\includegraphics[width=0.7\columnwidth]{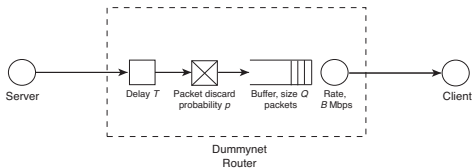}
\caption{Testbed setup.}
\vspace{-15pt}
\label{fig:testbed}
\end{figure}

The testbed used consists of three commodity servers (Dell Poweredge 850, 3GHz Xeon, Intel 82571EB Gigabit NIC) connected via a router and gigabit switches (Figure \ref{fig:testbed}).  Both the server and client machines ran a Linux 2.6.32.27 kernel, while the router ran a FreeBSD 4.11 kernel. \texttt{ipfw}-\texttt{dummynet} was used on the router to configure various propagation delays $T$, packet loss rates $p$,  queue sizes $Q$ and link rates $B$.  As indicated in Figure \ref{fig:testbed}, packet losses in \texttt{dummynet} occur before the rate constraint, not after, so that the bottleneck link capacity $B$ is not reduced.   Finally, data traffic is generated using HTTP traffic with \texttt{apache2} (version 2.2.8) and \texttt{wget} (version 1.10.2).


Figure \ref{fig:4} compares the mean in-order packet delivery delay, $E(D)$, for the low delay code and the systematic block code with different block sizes using the same coding rate.   Observe that there is an ``optimal'' block size where the block code achieves the lowest delay (a behavior also highlighted in \cite{Cloud15}), and the low delay code achieves a mean delay that is about half of of this value for the link used in the experiment. Assuming that the block size is not tuned with respect to the path characteristics, the delay improvement can be significantly larger. Furthermore, the two time history snapshots shown in Figure \ref{fig:5} also verify that the delay due to head-of-line blocking is reduced.

\begin{figure}
\centering
\includegraphics[width=1\columnwidth]{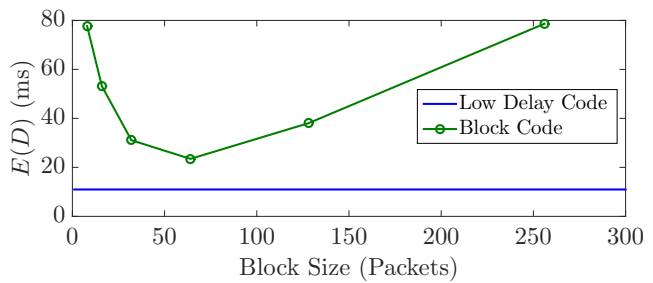}
\caption{Mean in-order packet delivery delay, $E(D)$, versus the block size for a random linear block code and low delay code.  A link rate of $25$ $Mbps$, $RTT$ of $60$ $ms$, packet erasure rate of $10\%$, redundancy of $15\%$, and a $cwnd$ fixed at the $BDP$ (125 packets) was used.}
\vspace{-10pt}
\label{fig:4}
\end{figure}


\begin{figure}
\centering
\subfigure[Block Code ($k = 64$)]{\includegraphics[width=0.45\columnwidth]{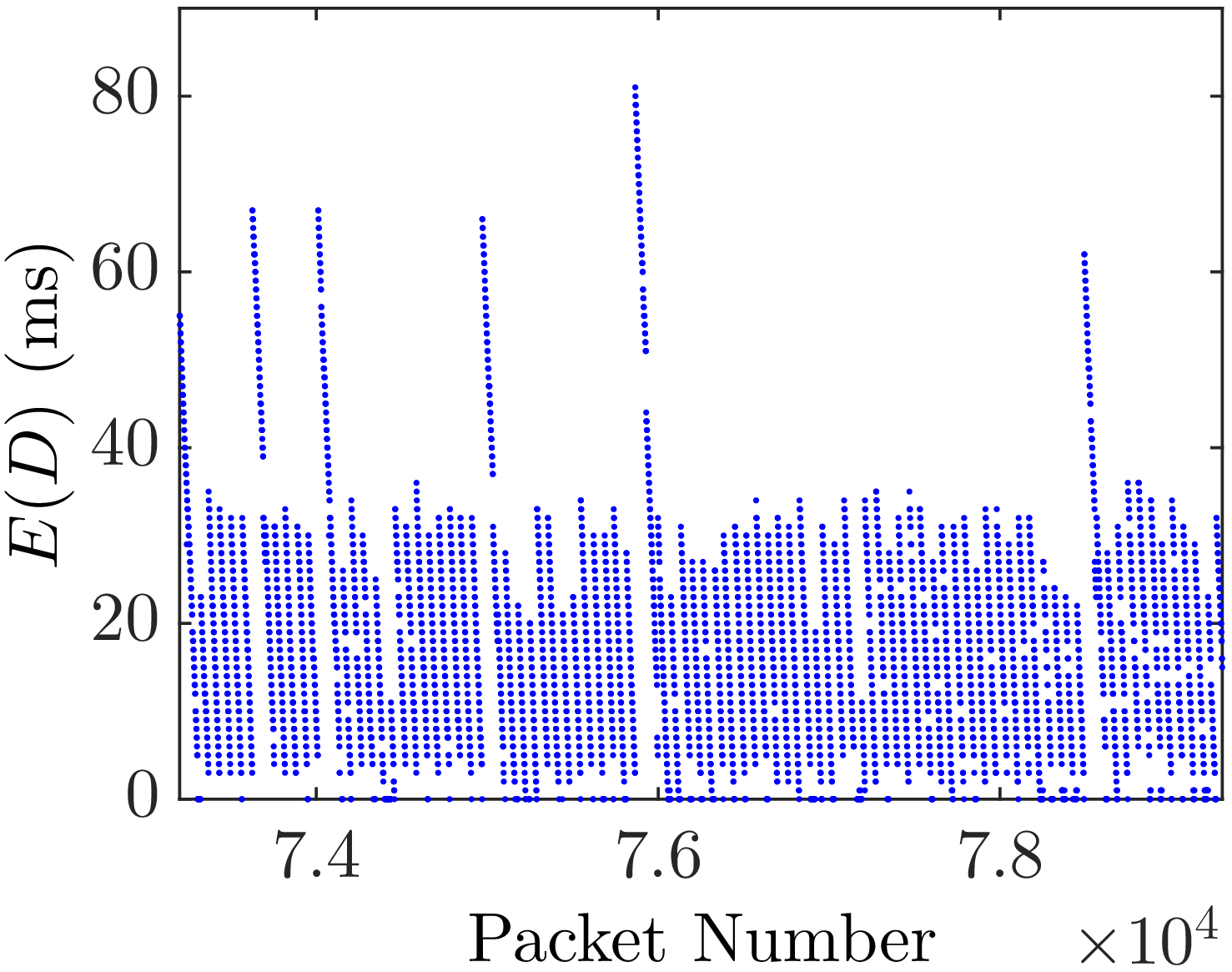}}
\subfigure[Low Delay Code]{\includegraphics[width=0.45\columnwidth]{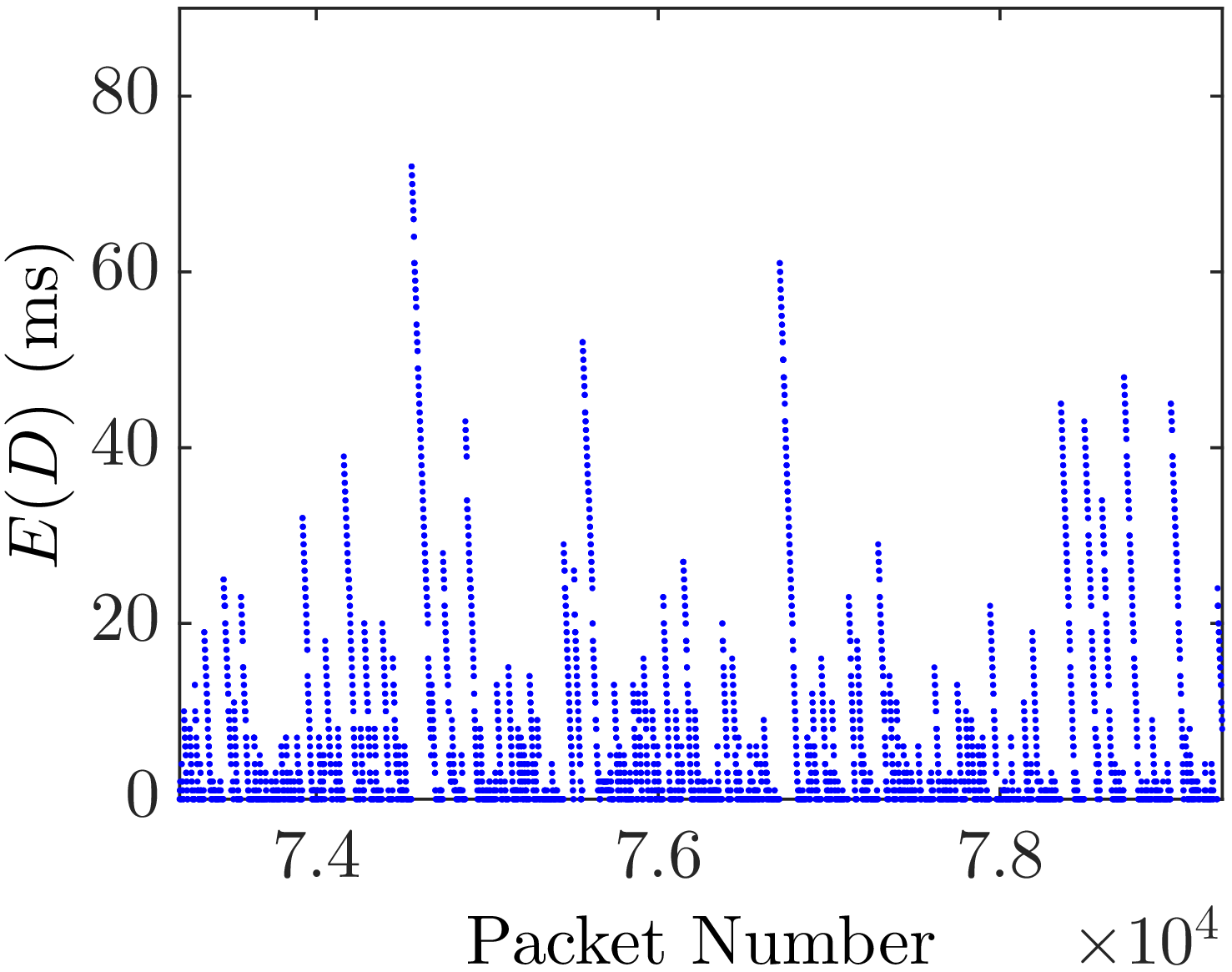}}
\caption{Time history snapshots of the in-order delivery delay, $E(D)$, for a random linear code with $k=64$ packets and the low delay code.  A link rate of $25$ $Mbps$, $RTT$ of $60$ $ms$, packet erasure rate of $10\%$, redundancy of $15\%$, and a $cwnd$ fixed at the $BDP$ was used.}
\vspace{-15pt}
\label{fig:5}
\end{figure}


Finally, the mean in-order delivery delay versus the reciprocal of the code rate is shown in Figure \ref{fig:7}. Measurements are shown for the block code over a range of block sizes, in addition to measurements taken using the low delay code. The initial code rate for both the block and low delay code was set to be $\nicefrac{1}{1-\epsilon}$.  Since feedback is used, there are no decoding failures for either type of code. As expected based on \cite{Cloud15} and Figure \ref{fig:4}, the block code's in-order delivery delay is a function of the coding rate and block size. For the initial coding rate used in the experiment, the minimum delay occurs for block sizes between $k=32$ and $k=64$. 

Within our experiments, the low delay code achieves a smaller delay over the entire range of coding rates.   Specifically, the in-order delay is always smaller when using the low-delay code for a given throughput.  Recall that the block code used for comparison is a modern, high performance code, and we expect similar behaviours when other types of block codes are used.   While not shown here, similar results were obtained for a wide range of network bandwidths and RTTs.   



\begin{figure}
\centering
\includegraphics[width=0.95\columnwidth]{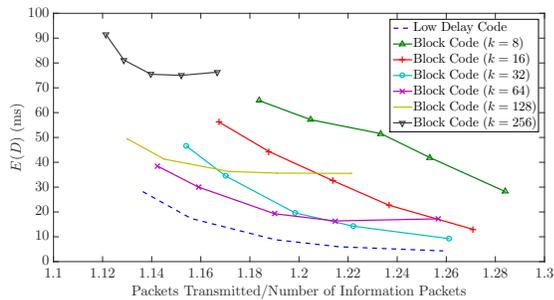}
\caption{Mean in-order packet delivery delay, $E(D)$, versus the number of packets transmitted divided by the number of information packets for a systematic random linear block code and low delay code. A link rate of $25$ $Mbps$, $RTT$ of $60$ $ms$, loss rate of $10\%$, and a $cwnd$ fixed at the $BDP$ was used. }
\vspace{-15pt}
\label{fig:7}
\end{figure}
\vspace{-5pt}
\section{Conclusions}
\label{sec:conclusions}
We introduced a class of streaming codes for packet erasure channels that is capacity achieving and provides a superior throughput-delay trade-off compared to block codes. By introducing the flexibility of where and when to place redundancy, our code construction achieves significantly lower in-order delay for a given throughput.  Furthermore, the mathematical analysis of throughput and delay is a primary contribution of the paper. This analysis required the development of a number of novel analytic tools based on both queuing and coding theory.  Finally, simulation and experimental results were provided showing the performance of our code construction. While the mathematical analysis is confined to operation without feedback, we showed that feedback can naturally be combined with low delay codes in both the simulation and experimental results.

While this paper focused on many of the analytical aspects of low delay codes, future work is still required. The decoding failure probability discussed in Section \ref{sec:failure-probability} can be further reduced by treating our code like a standard rateless code when terminating a packet stream. The interaction between congestion control and the low delay code, as well as its performance when TCP is used, is also of interest. Finally, we believe, based on work done by \cite{Teerapittayanon12}, that this code construction can help improve network performance by removing or replacing complex lower layer reliability schemes such as hybrid ARQ (H-ARQ).




\vspace{-5pt}
\bibliographystyle{IEEEtran}
\bibliography{LowDelayCoding}

\begin{IEEEbiography}[{\includegraphics[width=1in,height=1.25in,clip,keepaspectratio]{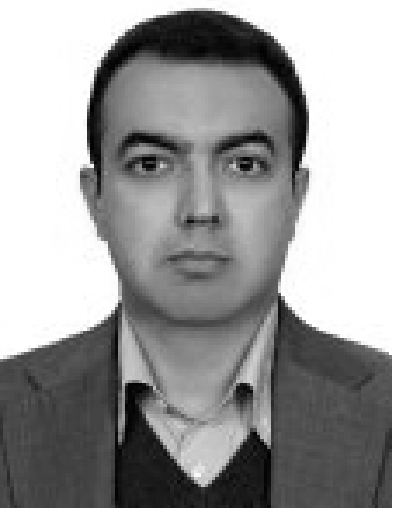}}]{Mohammad Karzand} obtained the Electrical Engineering Degree at the University of Tehran in 2005. Then, he received his M.Sc. (2007) and Ph.D. (2013) from the School of Computer and Communication Sciences in Ecole Polytechnique Fédérale de Lausanne (EPFL). Since 2013, he has been working as a senior research fellow at the Hamilton Institute (NUIM) and Trinity College Dublin in topics related to delay in Communication Networks.
\end{IEEEbiography}

\begin{IEEEbiography}[{\includegraphics[width=1in,height=1.25in,clip,keepaspectratio]{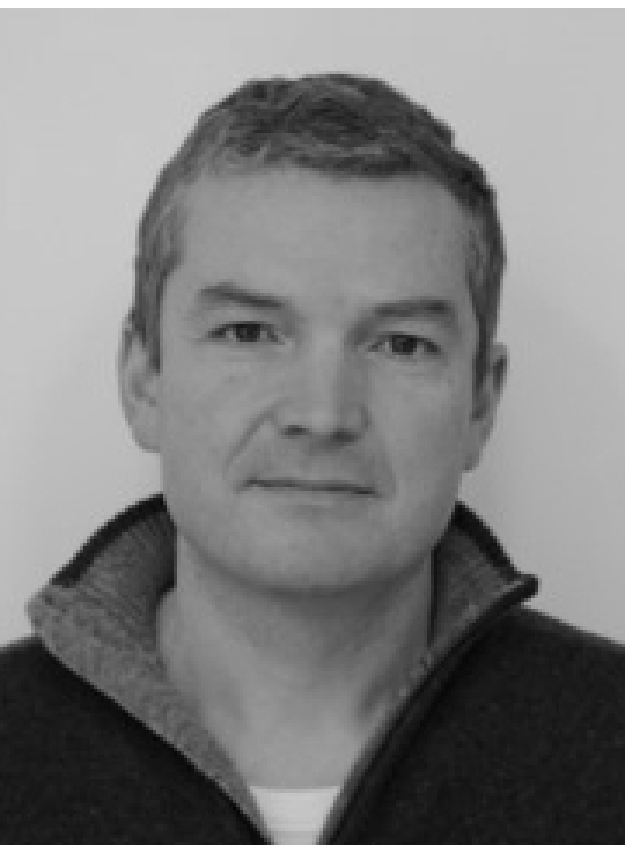}}]{Dough Leith} (SM'01) graduated from the University of Glasgow in 1986 and was awarded his PhD, also from the University of Glasgow, in 1989.  Prof. Leith moved to the National University of Ireland, Maynooth in 2001 to establish the Hamilton Institute (www.hamilton.ie) of which he was founding Director from 2001-2014.  Towards the end of 2014, Prof Leith moved to Trinity College Dublin to take up the Chair in Computer Systems in the School of Computer Science and Statistics.  His current research interests include wireless networks, congestion control,  optimisation and data privacy.
\end{IEEEbiography}

\begin{IEEEbiography}{Jason Cloud} is currently pursuing a Ph.D. degree in Electrical Engineering and Computer Science at the Massachusetts Institute of Technology (MIT). He received a B.S. degree in Engineering with a specialty in Electrical Engineering in 2002 from Colorado School of Mines and a M.S. degree in Electrical Engineering and Computer Science in 2011 from MIT. From 2003 to 2009, he was an officer and developmental engineer in the U.S. Air Force. His research interests include heterogeneous networks, satellite communications, and network coding.
\end{IEEEbiography}
\begin{IEEEbiography}{Muriel Médard} 
\end{IEEEbiography}

%
%
\vspace{-5pt}
\section*{Appendix I: Proof of Theorem  \ref{maingeniedet}}

The following lemma corresponds to a result of Tanner \cite{tanner61} for the busy time distribution of a G/D/1 queue. 

\begin{lemma}[Tanner61] \label{lemmaTanner}
Consider a decoding procedure at time $t=0$ with $r$ erasures already occurred in $t<0$. Suppose that the decoder stops decoding at time $t= l k$ when it receives the coded packet $S =k$. 
The followings are necessary and sufficient conditions for a decoding period to contain just $k$ decoding packets given that the decoding process starts with $r$ erasures in the previous interval:
\begin{enumerate}
\item[(i)] precisely $k-r$ erasures happen in the period of $lk$; 
\item[(ii)] the patterns of these erasures shall be  \emph{admissible}. The term \emph{admissible} is used here to mean that at least  one erasure should happen during a time $t=(r+1)l$, at least two within $(r+2)l$, and so on, so that the decoding process will remain activated for the whole of the time $t= kl$. 
\end{enumerate} 
The probability of (i) is 
\begin{equation}
\mathcal{B}(k-r; \epsilon, kl) = {kl \choose k-r} \epsilon^{k-r}(1-\epsilon)^{k(l-1)+r}
\end{equation}
and the probability of any pattern of the $k-r$ erasures being admissible is $\nicefrac{r}{k}$.
\end{lemma}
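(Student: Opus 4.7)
The plan is to recast the decoding dynamics as an embedded queue and then apply the Dvoretzky--Motzkin cycle lemma to count admissible erasure patterns.

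First I would set up the embedded queue at the coded-packet slots $l, 2l, \ldots, kl$. Let $Z_j$ denote the total number of erasures (on information slots or on coded slots) occurring in $[1, jl]$, and let $Y_j$ denote the number of still-uncleared erasures immediately after coded-packet slot $j$. Accounting by slot type -- each info-slot erasure adds $+1$, each successfully received coded packet contributes $-1$, and each erased coded packet leaves the queue unchanged -- gives $Y_j = r + Z_j - j$. Since $Y_j$ can only decrease at coded-packet slots, the busy period terminates exactly at the $k$-th coded packet iff $Y_k = 0$ and $Y_j > 0$ for $j < k$; equivalently, (i) $Z_k = k-r$ and (ii) $Z_j > j-r$ for $j = 1,\ldots,k-1$. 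Condition (i) then has the stated probability $\binom{lk}{k-r}\epsilon^{k-r}(1-\epsilon)^{k(l-1)+r}$ directly from the i.i.d.\ Bernoulli$(\epsilon)$ slot-erasure model.

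For the admissibility probability conditional on (i), I would reduce to block counts and apply the cycle lemma. Let $E_j = Z_j - Z_{j-1}$ be the number of erasures in block $j$ (so $\sum_{j=1}^{k} E_j = k-r$), and define the time-reversed sequence $U_j = 1 - E_{k-j+1}$. A short telescoping check gives $\sum_{i=1}^{k-m} U_i = Y_m$, so admissibility is equivalent to every prefix sum of $(U_1, \ldots, U_k)$ being strictly positive. Each $U_j \le 1$ (since $E_j \ge 0$) and $\sum_j U_j = r > 0$, so the Dvoretzky--Motzkin cycle lemma yields exactly $r$ out of the $k$ cyclic shifts of $(U_1, \ldots, U_k)$ with all positive prefix sums. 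Since admissibility depends on the erasure pattern only through $(E_1,\ldots,E_k)$, and cyclic shifts of the full pattern by multiples of $l$ slots preserve the uniform distribution on patterns with $k-r$ erasures while inducing exactly the cyclic shifts of $(E_1,\ldots,E_k)$, averaging the admissibility indicator over each $l$-shift orbit gives the conditional probability $r/k$.

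The main obstacle is setting up the admissibility condition so that the cycle lemma applies: the forward random walk $(Y_m)$ has increments $E_j - 1 \in \{-1, 0, \ldots, l-1\}$ whose positive part is unbounded in $l$, so the standard form of the cycle lemma does not apply to it directly. Time-reversal to the sequence $U_j$ flips the increments and yields terms bounded above by $1$, which is exactly the hypothesis of Dvoretzky--Motzkin. A minor additional point is that orbits of the $l$-shift action may be shorter than $k$ when $(E_1,\ldots,E_k)$ is periodic, but the cycle lemma's admissibility count is preserved per orbit so the ratio $r/k$ still holds. Combining (i) and (ii) then gives the joint probability $\tfrac{r}{k}\binom{lk}{k-r}\epsilon^{k-r}(1-\epsilon)^{k(l-1)+r}$ as claimed.
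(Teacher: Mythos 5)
The paper does not actually prove this lemma; it cites Tanner (1961) and moves on, so there is no ``paper proof'' to compare against. Your argument is a correct, self-contained derivation. The embedded-queue bookkeeping $Y_j = r + Z_j - j$ (treating an erased coded slot as ``arrival plus service'' so that erasures on \emph{any} slot count toward $Z_j$), the identification of ``busy period exactly $k$'' with $Z_k = k-r$ together with $Z_j > j-r$ for $1 \le j \le k-1$, the time-reversal $U_j = 1 - E_{k-j+1}$ to obtain increments bounded above by $1$, the telescoping identity $\sum_{i=1}^{k-m} U_i = Y_m$, and the exchangeability argument converting the cycle lemma's ``exactly $r$ of the $k$ cyclic shifts are good'' into a conditional probability of $r/k$ are all sound. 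The time-reversal step is genuinely needed, as you note: the forward increments $E_j - 1$ are not bounded above by $1$, so Dvoretzky--Motzkin does not apply to them directly, and reversal is the standard fix. The exchangeability step is also handled correctly: conditional on $Z_k = k-r$ the uniform measure on erasure patterns over $kl$ slots is invariant under the $l$-slot cyclic shift, which induces a cyclic shift of $(E_1,\dots,E_k)$, and averaging the admissibility indicator over the $k$ shifts gives $r/k$ for every pattern by the cycle lemma, so the periodicity worry dissolves.

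One thing worth flagging: the admissibility condition your proof actually uses, namely $Z_j \ge j-r+1$ for $j < k$ (at least one erasure by time $rl$, at least two by $(r+1)l$, and so on), is the correct one; the lemma's statement as printed in the paper (``at least one \ldots during a time $t=(r+1)l$'') is shifted by one index and is, for instance, vacuous when $k=r+1$ given condition (i). Your indexing is the one consistent with the queue dynamics, with the $r/k$ count, and with how the lemma is invoked in the proof of Theorem~\ref{maingeniedet} (where it is applied with $r \mapsto r-1$, $k \mapsto k-1$), so you have in fact corrected a small imprecision in the statement.
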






\begin{lemma}\label{sumchoose}
\begin{equation}
\sum_{r=2}^{\min(k,l)}  (r-1) {l \choose r}  {l(k-1) \choose k-r} = {(k-1)l \choose k}
\end{equation}
\end{lemma}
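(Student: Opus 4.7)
My plan is to reduce the identity to an algebraic manipulation built on Vandermonde's convolution. The guiding observation is that the factor $(r-1)$ in the summand can be split as
\begin{equation}
(r-1)\binom{l}{r} \;=\; r\binom{l}{r} - \binom{l}{r} \;=\; l\binom{l-1}{r-1} - \binom{l}{r},
\end{equation}
using the standard absorption identity $r\binom{l}{r} = l\binom{l-1}{r-1}$. Substituting this decomposition splits the sum into two Vandermonde-shaped pieces.

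Next I would extend the summation range from $r=2\dots\min(k,l)$ down to $r=0\dots\min(k,l)$ so that Vandermonde's identity can be applied cleanly. The $r=1$ term vanishes because $(r-1)=0$; the $r=0$ term contributes $-\binom{l(k-1)}{k}$ (since the bracketed factor equals $-1$ there). Therefore
\begin{equation}
\sum_{r=2}^{\min(k,l)}(r-1)\binom{l}{r}\binom{l(k-1)}{k-r} \;=\; \binom{l(k-1)}{k} \;+\; \sum_{r=0}^{\min(k,l)}(r-1)\binom{l}{r}\binom{l(k-1)}{k-r},
\end{equation}
so it suffices to show the extended sum equals zero. Applying the split above and Vandermonde's convolution to each piece yields
\begin{equation}
\sum_{r}\binom{l-1}{r-1}\binom{l(k-1)}{k-r} = \binom{lk-1}{k-1}, \qquad \sum_{r}\binom{l}{r}\binom{l(k-1)}{k-r} = \binom{lk}{k},
\end{equation}
so the extended sum reduces to $l\binom{lk-1}{k-1} - \binom{lk}{k}$.

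The final step is to observe the elementary identity $\binom{lk}{k} = \frac{lk}{k}\binom{lk-1}{k-1} = l\binom{lk-1}{k-1}$, which makes the difference vanish and hence the extended sum equal zero, leaving exactly $\binom{l(k-1)}{k}$ as claimed. I do not anticipate any real obstacle here: the main care is in tracking the endpoint contributions when extending the summation index, and in confirming that the Vandermonde applications are valid for all $k,l$ (terms with $r$ outside the natural support are zero by convention, so the extension is harmless).
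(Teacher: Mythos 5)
Your proof is correct, and it takes a genuinely cleaner route than the paper's. The paper first establishes three \emph{restricted-range} Vandermonde-type identities (starting from $r=2$, so each carries explicit correction terms for the missing $r=0,1$ contributions), and then combines them via a Pascal-style rewriting of $(r-1)\binom{l}{r}$. You instead decompose with the single absorption identity $r\binom{l}{r}=l\binom{l-1}{r-1}$, extend the summation to $r=0$ (which is legitimate: the $r=1$ term vanishes because $r-1=0$, the $r=0$ term contributes $-\binom{l(k-1)}{k}$, and terms with $r<0$ or $r>\min(k,l)$ are zero by the binomial convention), apply full-range Vandermonde twice without any correction terms, and observe that $l\binom{lk-1}{k-1}=\binom{lk}{k}$ makes the extended sum vanish identically. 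This buys a shorter argument with fewer intermediate identities to justify, and sidesteps the book-keeping that makes the paper's final display hard to parse. Both approaches rest on the same engine (Vandermonde plus an elementary rewriting of the weight $(r-1)\binom{l}{r}$), so the difference is one of organization rather than of method, but yours is tidier and I found no gaps in it.
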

\begin{proof}
We use the following identities that follow from Vandermonde's identity:
\small
\begin{align}
&\sum_{r=2}^{\min(k,l)}   {l \choose r}  {(k-1)l \choose k-r}   = {kl \choose k}  
   - {(k-1)l \choose k}  
   -  l {(k-1)l  \choose k-1} \label{sumchoose1} \\
&\sum_{r=2}^{\min(k,l)}   {l-1 \choose r-1}  {(k-1)l \choose k-r}   = {kl-1 \choose k-1} - {(k-1)l \choose k-1} \label{sumchoose2}\\
&\sum_{r=2}^{\min(k,l)}   {l-1 \choose r}  {(k-1)l \choose k-r}   = {kl-1 \choose k} - {(k-1)l \choose k}  \nonumber\\ 
&\quad - (l-1) {(k-1)l  \choose k-1} \label{sumchoose3}
\end{align}
\normalsize
Hence, 
$\sum_{r=2}^{\min(k,l)}  (r-1) {l \choose r}  {l(k-1) \choose k-r}   = \frac{l}{2} \sum_{r=2}^{\min(k,l)} \bigl[ (1-\frac{2}{l}) 
  +{l-1 \choose r-1} +{l-1 \choose r}\bigr] {l(k-1) \choose k-r}
   \overset{(\ref{sumchoose1}),(\ref{sumchoose2}),(\ref{sumchoose3})}{{=}} {(k-1)l \choose k}$.
\end{proof}

\begin{proof}[Proof of Theorem \ref{maingeniedet}, Part I]
We know that for the decoding process to go beyond $k$ we need at least $k$ erasures in the first $kl$ interval. This means that there are cases with more than $k$ erasures in the first $kl$ interval that the decoding process stops before $k$ but for it to be greater than $k$ we should have at least $k$ erasures. Suppose $E_{kl}$ denotes the number of erasures in the interval $kl$. We have $ P(S>k | E_{kl}\leq k )=0$. Formally speaking:
%
 $P(S > k )  =   P(E_{kl}>k) P(S>k | E_{kl}>k ) 
 +P(E_{kl}< k) P(S>k | E_{kl}<k ) 
 =   P(E_{kl}>k) P(S>k | E_{kl}>k ) 
  <  P(E_{kl}>k) = P(\mathcal{B}(\epsilon, kl )>k) 
  <  Q(\sqrt{k} \frac{1-l\epsilon}{\sqrt{l \epsilon (1-\epsilon)}} ) < \frac{1}{2} e^{-k \frac{(1-l\epsilon)^2}{{l \epsilon (1-\epsilon)}}}$. 
The Q-function is the tail probability of the standard normal distribution,  
$Q(x) = \frac{1}{\sqrt{2\pi}} \int_x^\infty \exp\left(-\frac{u^2}{2}\right) \, du.$
In the last part we used the Chernov bound and the Normal distribution $(\epsilon k l ,  k l \epsilon (1-\epsilon))$ as an approximation of the binomial distribution for large values of $k$. This bound only holds if $l \epsilon <1$. We know $E(S) = \sum_k P(S>k)$. Putting these two together we can deduce that $E(S)$ is finite if $l \epsilon<1$.    
\end{proof}


\begin{proof}[Proof of Theorem \ref{maingeniedet}, Part II]
The goal is to determine $P(S=k)$. We know that $P(S=0)$ if there is no erasure in the first $l$-interval or only the coding packet is erased, so
$P(S=0) = (1-\epsilon)^l + \epsilon (1-\epsilon)^{l-1} =(1-\epsilon)^{l-1}$. 
%
We also know that $P(S=1)$ if one and only one of the information packets is lost in the first $l$-interval, so
$P(S=1) = (l-1) \epsilon (1-\epsilon)^{l-1}$. 
%
We know that for $k>1$ we need at least $r=2$ erasures in the first $l$-interval which has the probability of 
$P(\mathcal{B}(\epsilon, l )=r) = {l \choose r} \epsilon^{r} (1-\epsilon)^{l-r}$ for $2 \leq r \leq l$. 
Starting from the second decoding $l$-interval, there are $r-1$ erasures to be taken care of. The reason is that either the first coded packet is erased, which means $r-1$ information packets are erased, or the first coded packet is not erased and it will eventually help us to decode one of the $r$ erasures. 
Using Lemma \ref{lemmaTanner}, knowing that there has been $r$ erasures in the first $l$-interval, we have the following for $k>1$,
$P(S=k | r ) = \frac{r-1}{k-1} P(\mathcal{B}(\epsilon, l(k-1) )= k-r)$.
This means that we can allow $k-r$ erasures in the $l(k-1)$-interval. 
Putting these two together we have

\small
\begin{align*}
&P(S=k  ) 
 = \sum_{r=2}^{\min(k,l)} P(\mathcal{B}(\epsilon, l )=r)  \frac{r-1}{k-1} P(\mathcal{B}(\epsilon, l(k-1) )= k-r) \\
& = \sum_{r=2}^{\min(k,l)}  {l \choose r} \epsilon^{r} (1-\epsilon)^{l-r}  \frac{r-1}{k-1} 
  {l(k-1) \choose k-r} \epsilon^{k-r} (1-\epsilon)^{k(l-1)-l+r} \\
& = \sum_{r=2}^{\min(k,l)}  \frac{r-1}{k-1} {l \choose r}  {l(k-1) \choose k-r}   \epsilon^{k} (1-\epsilon)^{k(l-1)} \\
& = \frac{1}{k-1}    \epsilon^{k} (1-\epsilon)^{k(l-1)} \sum_{r=2}^{\min(k,l)}  (r-1) {l \choose r}  {l(k-1) \choose k-r}.
\end{align*}
\normalsize
Using Lemma \ref{sumchoose}, we have
\small
\begin{align*}
&P(S=k  ) 
 = \frac{1}{k-1}    \epsilon^{k} (1-\epsilon)^{k(l-1)} {(k-1)l \choose k} \\
& = \frac{1}{k-1}    \epsilon^{k} (1-\epsilon)^{k(l-1)}  \frac{((k-1)l)!}{ k! ((k-1)l -k )!} \\
& = \frac{((k-1)l -k+1}{(k-1)k}    \epsilon^{k} (1-\epsilon)^{k(l-1)} 
\frac{((k-1)l)!}{ (k-1)! ((k-1)l -k +1 )!}  \\
& = \frac{(k-1)(l-1)}{(k-1)k}    \epsilon^{k} (1-\epsilon)^{k(l-1)} {(k-1)l \choose k-1} \\
& = \frac{l-1}{k}    \epsilon^{k} (1-\epsilon)^{k(l-1)} {(k-1)l \choose k-1}
\end{align*}
\normalsize
\end{proof}

\begin{proof}[Proof of Theorem \ref{maingeniedet}, Part III]
We know that for $l \epsilon <1 $ the mean of probability distribution of $S$ exists, we have
$\sum_k P(S=k) =1$,
which means
$\sum_{k=1}^\infty  \frac{l-1}{k}    \epsilon^{k} (1-\epsilon)^{k(l-1)} {(k-1)l \choose k-1} = 1- (1-\epsilon)^{l-1}$. 
By taking derivatives with respect to $\epsilon$, we have
$(1-\epsilon)^{l-2}=\sum_{k=1}^\infty   \epsilon^{k-1} (1-\epsilon)^{k(l-1)} {(k-1)l \choose k-1} 
-  (l-1) \epsilon^{k} (1-\epsilon)^{k(l-1)-1} {(k-1)l \choose k-1}$.   
That is,
$\frac{E(S)}{(l-1)\epsilon} - \frac{E(S)}{1-\epsilon}   =  (1-\epsilon)^{l-2}$
and by re-arranging, we have
$E(S) = \frac{(l-1) \epsilon (1-\epsilon)^{l-1}}{1-l \epsilon}$.
%
By taking derivatives with respect to $\epsilon$, we have
$\frac{E(S^2)}{\epsilon} - \frac{E(S^2)(l-1)}{1-\epsilon}   =  \frac{\partial}{\partial \epsilon} E(S)$
which means
$E(S^2)  = \frac{(l-1) \epsilon (1- \epsilon)^{l-1} ( (1 - l \epsilon)^2 + l \epsilon (1-\epsilon) )}{(1-l \epsilon)^3} 
 = E(S) + \frac{l(l-1) \epsilon^2 (1- \epsilon)^{l} }{(1-l \epsilon)^3}$.
\end{proof}

\vspace{-5pt}
\section*{Appendix II: Proof of Theorem  \ref{maindelay}}
\begin{proof}
Consider a transmission of stream with length $N_t$ a multiple of $l$ at time $t$. This translates to $\frac{N_t}{l} $ l-intervals in time $t$ and assume that the decoding process is consisted of $l$-intervals of length $s_1, s_2, \dots , s_n, \dots$. Remember that $S_1, S_2, \dots , S_n$ is sequence of  positive independent identically distributed random variables. Consider the $S^+$ process, and define $J_n$ as 
$J_n = \sum_{i=1}^n S_i^+, \qquad n>0$
and the renewal interval $[J_n,J_{n+1}]$ is a decoding $l$-interval. 
Then the random variable $(N_t)_{t\geq0}$ given by
$
 X_t =  \sum^{\infty}_{n=1} \mathbb{I}_{\{J_n \leq t\}}=\sup \left\{\, n: J_n \leq t\, \right\}$
(where $\mathbb{I}$ is the indicator function) represents the number of decoding $l$-intervals that have occurred by time $t$, and is a renewal process.

Let $W_1, W_2, \ldots$ be a sequence of $i.i.d.$ random variables denoting the sum of in order delivery delay in each decoding $l$-interval. We have two cases to consider.  Case (i): suppose the $j$'th period is an idle period.  Then $S_j=0$ and the information packets are delivered in-order with no delay.  Case (ii): suppose the $j$'th period is a busy period and the information packet erasure that initiated the busy period started in the first slot $t_{i(j)}+1$.   Then the first information packet is delayed by $S_jl$ slots, the second by $S_jl-1$ slots and so on.   The sum-delay over all of the information packets in the busy period is therefore $\sum_{k=1}^{S_jl}k - \sum_{k=0}^{S_j-1}kl < \frac{S_j^2l(l-1)}{2}$.  

The random variable
$Y_t = \sum_{i=1}^{X_t}W_i $
is a renewal-reward process and its expectation is the sum of in order delivery delay over the time-span of $t$. 
%
Based on the elementary renewal theorem for renewal-reward processes, we have 
$\lim_{t \to \infty} \frac{1}{t} \mathbb{E}[Y_t]   = \frac{\mathbb{E}(W_1)}{\mathbb{E}(S_1^+)}$.
%
Based on the construction of $W_i$, we have
$\mathbb{E}[W_1]  = \sum_{i=1}^{\infty} \mathbb{E}[W_1|S_1^+= i] \Pr(S_1^+=i)
  = \sum_{i=0}^{\infty} \mathbb{E}[W_1|S_1= i] \Pr(S_1=i)
   = \frac{l(l-1)}{2} E[S^2]$. 
We have
$
\lim_{t \to \infty} \frac{1}{t} \mathbb{E}[Y_t] \leq  \frac{1-l \epsilon}{(1-\epsilon)^{l}} \frac{l(l-1)}{2} E[S^2]
$
Since we assumed that $N_t = t l$, we have:
$
\lim_{N_t \to \infty} \frac{1}{N_t} \mathbb{E}[Y_t] \leq  \frac{(l-1)(1-l \epsilon)}{2(1-\epsilon)^{l}}  E[S^2]
$

\end{proof}

\vspace{-5pt}
\section*{Appendix III: Proof of Theorem   \ref{maingeniedetgroup}}
Unfortunately, we cannot extend the Tanner's result to this case in which is equivalent of 
computing the busy time of a $G/D/c$ queue. Instead we use the following lemma :

\begin{lemma}[\cite{ kreweras65}]\label{kreweras} 
Consider the strictly increasing and integer infinite sequence $\mathbf{a} = (a_1 , a_2  , \dots, a_n ,\dots)$.  The number $\mathcal{N}(\mathbf{a},n)$ of sequences of non-negative integers $(b_1, \dots, b_n, \dots)$ dominated by the sequence $(a_1 , a_2 -a_1 , \dots, a_n - a_{n-1},\dots)$ for a given $n$ so that

\small
\begin{align*}
b_1 & \leq a_1\\
b_1 + b_2 & \leq a_2 \\
\ldots & \\
b_1+b_2 + \dots + b_n & \leq a_n  
\end{align*}
\normalsize
is equal to the determinant of the following matrix:
\footnotesize
\[
\left(
\begin{array}{cccccc}
 {a_n +1 \choose 1} & 1  & 0 & \dots &  &  0 \\
  {a_{n-1} +1 \choose 2} &  {a_{n-1} +1 \choose 1}  &  1& 0 & \dots&  0\\
  {a_{n-2} +1 \choose 3} &  {a_{n-2} +1 \choose 2}  & {a_{n-2} +1 \choose 1}& 1 & \dots&  0\\
  \vdots &   & &  & &  \\
    {a_{2} +1 \choose n-1} &  {a_{2} +1 \choose n-2}  & {a_{2} +1 \choose n-3 }& \dots & {a_{2} +1 \choose 1 }&  1 \\
    {a_{1} +1 \choose n} &  {a_{1} +1 \choose n-1}  & {a_{1} +1 \choose n-2 }& \dots & {a_{1} +1 \choose 2 }&   {a_{1} +1 \choose 1 }
     \end{array}
\right)
\]
\normalsize
which can be computed recursively as follows
\[
\mathcal{N}(\mathbf{a},n) = \sum_{j=1}^n (-1)^{j-1} { a_{n-j+1} \choose j} \mathcal{N}(\mathbf{a},n-j).
\]

\end{lemma}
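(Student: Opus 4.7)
My plan is to prove the recursive formula first by inclusion-exclusion on constraint violations, and then deduce the determinant formula by strong induction via Laplace expansion of the matrix along its first column. The base case $n = 1$ is immediate in both directions: the admissibility condition reduces to $b_1 \leq a_1$, giving $\mathcal{N}(\mathbf{a}, 1) = a_1 + 1$, which matches the single-entry determinant $\binom{a_1+1}{1}$ and, with the natural convention $\mathcal{N}(\mathbf{a}, 0) = 1$, is consistent with the stated recursion.

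For the recursion, I would use inclusion-exclusion on violated constraints. For $V \subseteq \{1, \ldots, n\}$, let $N(V)$ count non-negative sequences satisfying $S_k \geq a_k + 1$ for every $k \in V$ together with the terminal bound $S_n \leq a_n$. Standard sign-alternating inclusion-exclusion gives $\mathcal{N}(\mathbf{a}, n) = \sum_V (-1)^{|V|} N(V)$. The $2^n$-term sum collapses dramatically when grouped by the largest index $j \in V$: for fixed $j$, introducing non-negative slack variables $t_k$ via $S_k = a_k + 1 + t_k$ for $k \in V$ allows the inner sum over choices of $V \cap \{1, \ldots, j-1\}$ to telescope via a Vandermonde-style identity, producing the binomial weight $\binom{a_{n-j+1}+1}{j}$, while the tail of the sequence beyond position $n-j+1$ reparametrizes as an instance of the same problem on the truncated data $(a_1, \ldots, a_{n-j})$, contributing the factor $\mathcal{N}(\mathbf{a}, n-j)$. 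Assembling the pieces yields the stated recursion.

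For the determinant formula, strong induction using the recursion suffices. Laplace expansion along the first column of the given $n \times n$ matrix, call it $M_n$, produces $\sum_{i=1}^n (-1)^{i-1} \binom{a_{n-i+1}+1}{i} \det(M_{n,i})$, where $M_{n,i}$ denotes the $(i, 1)$-minor. One verifies by direct inspection that $M_{n,i}$ is block lower-triangular: the top-left $(i-1) \times (i-1)$ block is lower-triangular with ones on the main diagonal (inherited from the super-diagonal of $M_n$, shifted left by one column), hence has determinant $1$; and the bottom-right $(n-i) \times (n-i)$ block is, after reindexing rows and columns, precisely the Kreweras matrix associated with the truncated sequence $(a_1, \ldots, a_{n-i})$. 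By the inductive hypothesis this latter determinant equals $\mathcal{N}(\mathbf{a}, n-i)$, so the Laplace expansion matches the right-hand side of the recursion, closing the induction.

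The principal obstacle is the Vandermonde-style collapse in the recursion proof: exhibiting the cancellations needed to reduce the $2^n$-term inclusion-exclusion to the $n$-term alternating sum with the correct binomial weights requires careful parameter tracking, and in particular ensuring that the reparametrization behaves cleanly across the gaps between successive elements of $V$. A clean alternative, should this bookkeeping become unwieldy, is to apply the Lindström-Gessel-Viennot lemma directly to a system of $n$ non-intersecting monotone lattice paths between appropriately chosen sources on the $x$-axis and sinks displaced according to the staircase $\{a_k + 1\}$; LGV would produce the determinant in a single step, after which the recursion follows from the Laplace expansion described above.
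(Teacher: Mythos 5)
First, a framing point: the paper never proves this lemma --- it is imported verbatim from Kreweras \cite{kreweras65} and used as a black box in Appendix~III --- so your argument has to stand on its own. Your second step does: expanding the stated matrix $M_n$ along its first column, the $(i,1)$ minor is block lower-triangular, with an upper-left unitriangular block and a lower-right block that is exactly the Kreweras matrix of the truncated sequence $(a_1,\dots,a_{n-i})$, so $\det M_n=\sum_{i=1}^{n}(-1)^{i-1}\binom{a_{n-i+1}+1}{i}\det M_{n-i}$, and induction then identifies $\det M_n$ with $\mathcal{N}(\mathbf{a},n)$ \emph{provided} $\mathcal{N}$ satisfies that same recursion. (Incidentally, the recursion as printed in the lemma omits the ``$+1$'' in the binomial; the case $n=1$, where $\mathcal{N}(\mathbf{a},1)=a_1+1$, shows the coefficient must be $\binom{a_{n-j+1}+1}{j}$, which is the version you in fact use.)

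The gap is in the first and essential step: establishing the recursion for $\mathcal{N}$ itself. Your inclusion--exclusion, grouped by the largest violated index $j$ with the inner alternating sum over subsets of $\{1,\dots,j-1\}$, does not collapse to the claimed form: carrying out that inner sum merely converts the term into ``all constraints before $j$ hold, constraint $j$ is violated, and $S_n\le a_n$,'' i.e.\ you recover the trivial first-violation partition of $\{S_n\le a_n\}$ with a single minus sign per $j$; no alternating weights $\binom{a_{n-j+1}+1}{j}$ emerge, and the resulting counts are not of the form $\binom{a_{n-j+1}+1}{j}\,\mathcal{N}(\mathbf{a},n-j)$. Note also that in the true recursion the factor $\mathcal{N}(\mathbf{a},n-j)$ governs the \emph{first} $n-j$ coordinates while the binomial weight is attached to the last $j$ coordinates and the single boundary value $a_{n-j+1}$; your sketch has these roles reversed. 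The ``Vandermonde-style telescoping'' you invoke is precisely the content of Kreweras's theorem and is asserted rather than exhibited --- you rightly flag it as the principal obstacle. To close the gap you should either (i) prove the identity $\sum_{j=0}^{n}(-1)^{j}\binom{a_{n-j+1}+1}{j}\mathcal{N}(\mathbf{a},n-j)=0$ for $n\ge 1$ directly, e.g.\ by a sign-reversing involution on pairs consisting of an admissible prefix $(b_1,\dots,b_{n-j})$ and a $j$-subset of $\{0,\dots,a_{n-j+1}\}$ (move the largest subset element onto the prefix as a new partial sum, or conversely; strict monotonicity of $\mathbf{a}$ is what keeps this well defined), or (ii) carry out your LGV fallback in full, specifying sources and sinks so that non-intersecting path families biject with dominated sequences and the crossing families cancel in the signed sum; that yields the determinant in one step, after which your Laplace-expansion argument supplies the recursion.
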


\begin{proof}[Proof of Theorem \ref{maingeniedetgroup}, Part I]
We know that for the decoding process to go beyond $k$ we need at least $ck$ erasures in the first $kl_g$ interval. This means that there are cases with more than $kc$ erasures in the first $kl_g$ interval that the decoding process stops before $k$ but for it to be greater than $k$ we should have at least $ck$ erasures. Suppose $E_{kl_g}$ denotes the number of erasures in the interval $kl_g$. We have $ P(S>k |_g E_{kl_g}\leq k )=0$. Formally speaking 
%
 $P(S > k )  =   P(E_{kl_g}>ck) P(S>k |_g E_{kl_g}>ck ) 
 +P(E_{kl_g}< ck) P(S>k | E_{kl_g}<ck ) 
 =   P(E_{kl_g}>ck) P(S>k |_g E_{kl_g}>ck ) 
  <  P(E_{kl_g}>ck) = P(\mathcal{B}(\epsilon, kl_g )>ck)
  <  Q(\sqrt{k} \frac{c-l_g\epsilon}{\sqrt{l_g \epsilon (1-\epsilon)}} ) < \frac{1}{2} e^{-k \frac{(c-l_g\epsilon)^2}{{l_g \epsilon (1-\epsilon)}}}$.
This bound only holds if $l_g \epsilon <c$. We know $E(S) = \sum_k P(S>k)$. Putting these two together we can deduce that $E(S)$ is finite if $l_g \epsilon<c$.    
\end{proof}


\begin{proof}[Proof of Theorem \ref{maingeniedetgroup}, Part II]
The goal is to determine $P(S=k)$. We know that $P(S=0)$ if there is none of the information packets in the first $l_g$-interval is erased
$P(S=0)  =(1-\epsilon)^{l_g-c}$. 
%
We also know that $P(S=1)$ if at least one information packet has been lost and the total number of erasures in the first $l_g$ interval is at most $c$, so
$P(S=1) =  \sum_{i=1}^c \sum_{j=0}^{c-i}  {l-c \choose i} {c \choose j}   \epsilon^{i+j}  (1-\epsilon)^{l-i-j}$. 
%
We know that for $S=k>1$, we need the number of erasures in the $kl_g$ interval to be between $kc-c+1$ and $kc$. We also need at least $c+1$ erasures in the first $l_g$-interval, at least $2c+1$ erasures in the first $2l_g$-interval and so on, so that the decoding process remains activated.

Suppose that $S=k$  and consequently the number of erasures in the $kl_g$ interval is $kc-p$ for a given $p$, $0\leq p \leq c-1$. In the remainder of this part of the proof, the goal is to find the number $NP(l_g,c,k,p)$ of erasure patterns with $kc-p$ erasures so that the decoding process remains activated till the reception of coded packets in the $k$-th $l_g$-interval. 

We formulate an erasure pattern by a $\{ 0, 1\}$ sequence $\mathbf{x}$ with the length of $k l_g$. We set $x_i=1$ if an erasure occurs at time $i$. We shall describe the positions of the $1$'s in $\mathbf{x}$ by the sequence $\mathbf{u} = (u_1, u_2,\dots , u_{k c - p})$, where the $i$'s erasure is found in position $u_i$ in the sequence $\mathbf{x}$ and denote by  $\mathbf{\mu} = (\mu_1, \mu_2,\dots , \mu_{k c - p})$ the sequence of differences $\mu_i = u_i - u_{i-1}$ (with $\mu_1=u_1$).
Assuming that there exist $kc-p$ erasures and the decoding process is activated till the $k$th $l_g$ interval, we know that there should exist $c+1$ erasures in the first $l_g$ interval. Hence for an arbitrary admissible erasure pattern $\mathbf{x}$, the first erasure can not occur later than the time $l_g-c$, so we have $\mu_1 \leq l_g -c$. Following the same argument, $\mathbf{\mu}$ should satisfy the following inequalities:

\footnotesize
\begin{align*}
\mu_1 & \leq l_g -c\\
\mu_1+\mu_2  & \leq l_g -c+1\\
\vdots & \\
\mu_1 + \mu_2 + \dots + \mu_{c+1} & \leq l_g \\
\mu_1 + \mu_2 + \dots + \mu_{c+1}+ \mu_{c+2} & \leq 2 l_g - c+1 \\
\mu_1 + \mu_2 + \dots + \mu_{c+2}+\mu_{c+3} & \leq 2 l_g - c+2 \\
\vdots & \\
\mu_1 + \mu_2 + \dots + \mu_{2c}+\mu_{2c+1} & \leq 2 l_g  \\
\vdots & \\
\mu_1 + \mu_2 + \dots + \mu_{(k-2)c+1}+ \mu_{(k-2)c+2} & \leq (k-1) l_g - c+1 \\
\mu_1 + \mu_2 + \dots + \mu_{(k-2)c+2}+\mu_{(k-2)c+3} & \leq (k-1) l_g - c+2 \\
\vdots & \\
\mu_1 + \mu_2 + \dots + \mu_{(k-2)c+c}+\mu_{kc-c+1} & \leq (k-1) l_g \\
\vdots & \\
\mu_1 + \mu_2 + \dots + \mu_{kc-c+1}+ \mu_{kc-c+2} & \leq k l_g - c+p + 1\\
\mu_1 + \mu_2 + \dots + \mu_{kc-c+2}+\mu_{kc-c+3} & \leq k l_g - c+ p+2 \\
\vdots & \\
\mu_1 + \mu_2 + \dots + \mu_{kc-p-1}+\mu_{kc-p} & \leq k l_g 
\end{align*}   
\normalsize
Using Lemma \ref{kreweras}, we can enumerate such admissible erasure patterns for given values of $k, l_g , c , p$. 
\end{proof}

\vspace{-5pt}
\section*{Appendix IV: Proof of Theorem  \ref{throughputthm}}
 \begin{proof}[Proof of Theorem \ref{throughputthm} ]
 Assume that the decoding process consists of intervals of lengths $s_1, s_2, s_3 , \dots , s_k , s_{k+1} , \dots$, and without the loss of generality, assume that
 \small
 \begin{equation}   
 \# \{ s_j = 0 \}_{ j \leq k}+  \sum_{j=1}^{k} s_j < \frac{N}{l} < \# \{ s_j = 0 \}_{j \leq k+1} + \sum_{j=1}^{k+1} s_j.  \label{boundonN} 
 \end{equation}
 \normalsize
 Note that if $S=0$, then no erasure has occurred in the $(l-1)$-interval and all information packets are received without delay.
 The throughput of this scheme is
 $GT = (l-1) ( \# \{ s_j = 0 \}_{j \leq k} +\sum_{j=1}^{k} s_j)/N
 = (l-1) (\# \{ s_j = 0 \}_{j \leq k+1} +\sum_{j=1}^{k+1} s_j 
    -  s_{k+1} + \mathcal{I}(s_{k+1}=0))/N$,
 where $\mathcal{I}(s_{k+1}=0)$ is the indicator function and is equal to one if $s_{k+1}=0$ and otherwise it is zero.
 From (\ref{boundonN}), we know 
%
From theorem \ref{maingeniedet}, we know the probability distribution of $S_{k+1}$, so
$\Pr{( GT  > \frac{l-1}{l} - \frac{(l-1)  f } {N} )} = \Pr{(S_{k+1} \leq f)}$.
In particular, using the Chernov bound for $\Pr(S>f)$, we have
$\Pr{( GT  > \frac{l-1}{l} - \frac{(l-1)  f } {N} )}  > 1- \frac{1}{2} e^{-f \frac{(1-l\epsilon)^2}{{l \epsilon (1-\epsilon)}}}$.
Setting $\delta = \frac{1}{2} e^{-f \frac{(1-l\epsilon)^2}{{l \epsilon (1-\epsilon)}}}$, we can solve for $f_{\delta}$. Assume that $R_0 = \frac{l-1}{l} - \gamma$, for any $N > \frac{(l-1)f_{\delta}}{\gamma}$, we have 
$\Pr(GT > R_0) > 1- \delta$.
  \end{proof}


\vspace{-5pt}
\section*{Appendix V: Proof of Theorems  \ref{theoremfailure}, \ref{theoremfailstr} and Corollary \ref{corollaryfailure}}
\begin{proof}[Proof of Theorem \ref{theoremfailure} ]


Consider an admissible decoding matrix $G_{k \times k}$ for a given erasure pattern $E_1, E_2, \dots , E_k$. Looking at this matrix from a column perspective, the number of zeros in each column $i$, denoted by $Z_i$ is equal to the number of rows $j$ in which $\sum_j E_j$ is strictly less than $i$. Based on this construction, the sequence $Z_i$ is increasing. 
Let $\{ \nu_1,\dots,\nu_k \}$ be the columns of the decoding matrix $G$. The number of zeros $Z_i$ in each vector $\nu_i$ can be computed based on the erasure pattern. The vector $ \nu_i$ is a $k$-dimensional vector and its first $Z_i$ element is forced to be zero and the other $k-Z_i$ are drawn identically and independently from a uniform distribution drawn from an alphabet with size $Q$. 

Our argument in this part of the proof is based on the Landsburg's combinatorial argument in \cite{berlekamp80}. We now, compute the probability that $k$ columns $\{ \nu_1,\dots,\nu_k \}$ of $G$ are linearly independent. Starting from the last vector $\nu_k$, we know that it should be non-zero which happens with probability $\frac{Q^{k - Z_k}-1}{Q^{k-Z_k}} = 1-  \frac{1}{Q^{k-Z_k}} $. The vector $\nu_{k-1}$ must lie outside of the one-dimensional subspace containing $0$ and the first vector $\nu_k$. This happens with probability of $\frac{Q^{k - Z_{k-1}}-Q}{Q^{k-Z_{k_1}}} = 1-  \frac{Q}{Q^{k-Z_{k-1}}}$. The $i$-th vector $\nu_i$ must lie outside the $(k-i)$ -dimensional subspace spanned by the last $k-i$ vectors $\{ \nu_k , \nu_{k-1}, \dots , \nu_{k-i+1} \}$. This happens with probability of $\frac{Q^{k - Z_{i}}-Q^{k-i}}{Q^{k-Z_{i}}} = 1-  \frac{Q^{k-i}}{Q^{k-Z_{i}}}$. Putting all these together, the probability that the decoding matrix $G_{k \times k}$ is of full rank is given by
$\Pr(\textrm{rank}(G) = k )=  \prod_{i=1}^{k}  (1-  \frac{Q^{k-i}}{Q^{k-Z_{i}}} )$.

Considering different erasure patterns, 
$\Pr(\textrm{rank}(G) = k )$ is maximized if all $Z_i$s are zero. This means that all the erasures happen in the first $l$-interval, i.e. $E_1=k, E_2=0, \dots , E_k=0$  and $Z_1=Z_2=\dots=Z_k=0$. In this case the probability of the matrix $G$ to be full rank is given by
$\Pr(\textrm{rank}(G) = k )=  \prod_{i=1}^{k}  (1-  \frac{Q^{k-i}}{Q^{k}} )$.
 %
$\Pr(\textrm{rank}(G) = k )$ is minimized if all $Z_i$s are maximized. Since $Z_i$ is an increasing sequence, in order to maximize $Z_i$ we have to also maximize $Z_{i+1 }, Z_{i+2}, \dots $. The admissible erasure pattern that maximizes the sequence $Z_i$ is $E_1=2, E_2=1, E_3=1, \dots ,E_{k-1}=1 , E_k =0,$ and $Z_1=0, Z_2=0, Z_3=1, Z_4 = 2 , \dots, Z_k= k-2.$  This is justified because if any of the erasures happens sooner the number of rows $j$ in which $\sum_j E_j$ is strictly less than some value of $i$ becomes smaller, contradicting the assumption that the sequence $Z_i$ is maximized. On the other hand if any of the erasures occurs in a later $l$-interval, the erasure pattern will not be admissible. In this case the the probability of the matrix $G$ to be full rank is given by
$\Pr(\textrm{rank}(G) = k ) =  \prod_{i=1}^{2}   (1-  \frac{Q^{k-i}}{Q^{k}} ) . \prod_{i=3}^{k}   (1-  \frac{Q^{k-i}}{Q^{k-(i-2)}} ) 
 =   (1-  \frac{1}{Q} ) (1-  \frac{1}{Q^{2}} )^{k-1} 
 =    \frac{Q}{Q+1} (1-  \frac{1}{Q^{2}} )^{k}$.
\end{proof}

\begin{proof}[Proof of Theorem \ref{theoremfailstr}]
Consider a transmission of stream with length $N_t$ a multiple of $l$ at time $t$. This translates to $\frac{N_t}{l} $ l-intervals in time $t$ and assume that the decoding process is consisted of $l$-intervals of length $s_1, s_2, \dots , s_n, \dots$. Remember that $S_1, S_2, \dots , S_n$ is sequence of  positive independent identically distributed random variables. Consider the $S^+$ process and 
define for each $n>0$ 
$J_n = \sum_{i=1}^n S_i^+$, 
and the renewal interval $[J_n,J_{n+1}]$ is a decoding $l$-interval. 
Then the random variable $(N_t)_{t\geq0}$ given by
$
 X_t =  \sum^{\infty}_{n=1} \mathbb{I}_{\{J_n \leq t\}}=\sup \left\{\, n: J_n \leq t\, \right\}$
(where $\mathbb{I}$ is the indicator function) represents the number of decoding $l$-intervals that have occurred by time $t$, and is a renewal process.
Let $W_1, W_2, \ldots$ be a sequence of $i.i.d.$ random variables denoting the occurrence of the event of decoding failure in each decoding $l$-interval. We define $ W_i$ to be $
\mathbb{I}(\text{Decoding failure occurring in the  $i$-th decoding $l$-interval })$, where $\mathbb{I}$ is the indicator function. Note that $\mathbb{E}(W_i | S_i^+ =k ) = \Pr{(W_i = 1| S_i^+ =k )}$.
Then the random variable
$Y_t = \sum_{i=1}^{X_t}W_i $
is a renewal-reward process and its expectation is the decoding failure probability over the time-span of $t$. 
Based on the elementary renewal theorem for renewal-reward processes \cite{gallager13}, we have:
\begin{equation}\label{eqn:reward}
\lim_{t \to \infty} \frac{1}{t} \mathbb{E}[Y_t]   = \frac{\mathbb{E}(W_1)}{\mathbb{E}(S_1^+)}.
\end{equation}
Based on the construction of $W_i$, we have
\small
\begin{align}
\mathbb{E}[W_1]  &= \sum_{i=1}^{\infty} \mathbb{E}[W_1|S_1^+= i] \Pr(S_1^+=i)\nonumber \\
 & = \sum_{i=0}^{\infty} \mathbb{E}[W_1|S_1= i] \Pr(S_1=i)\nonumber \\
 &\leq  \sum_{i=1}^{\infty} (1-\frac{Q}{Q+1} (1-  \frac{1}{Q^{2}} )^{k}) \Pr(S_1=i).\label{eqn:EW}
\end{align}
\normalsize
The last step follows from the fact that  the decoding failure probability is zero if $S_1 =0$ and for all $S_1 \geq 1$ the decoding failure probability is upper bounded by $(1-  \frac{1}{Q^{2}} )^{k} $ following the theorem \ref{theoremfailure}. 
Putting together equations (\ref{eqn:reward}),(\ref{eqn:EW}) and Corollary \ref{maingeniecorr}, we have
$\lim_{t \to \infty} \frac{1}{t} \mathbb{E}[Y_t] \leq  \frac{1-l \epsilon}{(1-\epsilon)^{l}} 
 \sum_{i=1}^{\infty} (1-\frac{Q}{Q+1} (1-  \frac{1}{Q^{2}} )^{k}) \Pr(S_1=i)$ 
Since we assumed that $N_t = t l$, we have:
$\lim_{N_t \to \infty} \frac{1}{N_t} \mathbb{E}[Y_t] \leq  \frac{(1-l \epsilon)}{l(1-\epsilon)^{l}} 
 \sum_{i=1}^{\infty} (1-\frac{Q}{Q+1} (1-  \frac{1}{Q^{2}} )^{k}) \Pr(S_1=i)$. 
\end{proof}

\begin{proof}[Proof of Corollary \ref{corollaryfailure}]
Consider the function $f(\epsilon) = \epsilon (1-\epsilon)^{l-1}$ and its derivative $f'(\epsilon) = (1-l\epsilon) (1-\epsilon)^{l-2}$. $f'(\epsilon)$ is positive for $ \epsilon < \frac{1}{l}$, which means $f(\epsilon)$ is an increasing function. This guarantees that $\epsilon_0$ the solution to the equation $\epsilon_0 (1- \epsilon_0)^{l-1}=(1-\frac{1}{Q^2}) \epsilon (1- \epsilon)^{l-1} $ exists and $\epsilon_0 < \epsilon < \frac{1}{l}$. 
From theorem \ref{maingeniedet}, we have
%
$\sum_{k=1}^\infty  \frac{l-1}{k}    f^k(\epsilon) {(k-1)l \choose k-1} = 1- (1-\epsilon)^{l-1}$ and 
$\sum_{k=1}^\infty  \frac{l-1}{k}    f^k(\epsilon_0) {(k-1)l \choose k-1}  = 1- (1-\epsilon_0)^{l-1}$. 
%
At this point, setting $\alpha  = \frac{1-l \epsilon}{(1-\epsilon)^{l}} $ and following the theorem \ref{theoremfailstr}, we can compute the decoding failure probability:
%
$ \lim_{N_t \to \infty} \frac{1}{N_t} \Pr (\textrm{DF}) 
  \leq  \alpha  \sum_{k=1}^{\infty}(1- \frac{Q}{Q+1}(1-\frac{1}{Q^2})^{k} )  \Pr{(S=k)}
  = \alpha  \sum_{k=1}^{\infty} \Pr{(S=k)}  
-  \alpha \frac{Q}{Q+1} \sum_{k=1}^{\infty} (1-\frac{1}{Q^2})^{k}   \Pr{(S=k)} 
  = \alpha  \sum_{k=1}^{\infty}\frac{l-1}{k}    f^k(\epsilon) {(k-1)l \choose k-1}  
-  \alpha\frac{Q}{Q+1} \sum_{k=1}^{\infty} \frac{l-1}{k}    f^k(\epsilon_0) {(k-1)l \choose k-1} 
 = \alpha (1- (1-\epsilon)^{l-1} - \frac{Q}{Q+1}  (1- (1-\epsilon_0)^{l-1})) 
  =  \alpha( \frac{Q}{Q+1}   (1-\epsilon_0)^{l-1} -   (1-\epsilon)^{l-1}  + \frac{1}{Q+1} )$.

\end{proof}

\end{document}